\par\vspace{4mm}}
\newcommand{\opt}{\mbox{\sf OPT}}
\newcommand{\be}{\begin{enumerate}}
\newcommand{\ee}{\end{enumerate}}
\newcommand{\bd}{\begin{description}}
\newcommand{\ed}{\end{description}}
\newcommand{\bi}{\begin{itemize}}
\newcommand{\ei}{\end{itemize}}
\newtheorem{lemma}{Lemma}[section]
\newtheorem{fact}{Fact}[section]
\newtheorem{theorem}{Theorem}
\newtheorem{claim}{Claim}[section]
\newtheorem{definition}{Definition}[section]
\newenvironment{proof}{\smallskip \noindent {\bf Proof:}}{\hfill\stopproof}
\def\stopproof{\square}
\def\square{\vbox{\hrule height.2pt\hbox{\vrule width.2pt height5pt \kern5pt
\vrule width.2pt} \hrule height.2pt}}
\newcommand{\ra}{\rightarrow}
\renewcommand{\phi}{\varphi}
\newcommand{\eps}{\epsilon}
\newcommand{\F}{\ensuremath{\mathbb F}}
\newcommand{\E}[1]{\text{\bf E}[#1]}
\newcommand{\Exp}[2]{\text{\bf E}_{#1} #2}
\renewcommand{\Pr}[1]{\text{\bf Pr}\left [#1\right]}
\newcommand{\SPPprob}{{\bf{Sequential Posted Pricing}}\xspace}
\newcommand{\tp}{\tilde{p}}
\newcommand{\tu}{\tilde{u}}
\newcommand{\hu}{\hat{u}}
\newcommand{\p}{\mathbf{p}}
\newcommand{\lpkspp}{\mbox{\sc Lp-K-SPM}}
\newcommand{\dkspp}{\mbox{\sc Dual-K-SPM}}
\newcommand{\lagspp}{\mbox{LagrangianSPM}}
\newcommand{\extgap}{\mbox{VersionGAP}}
\newcommand{\eat}[1]{}
\begin{document}

\title{Approximation Schemes for Sequential Posted Pricing in Multi-Unit Auctions}

\author{Tanmoy Chakraborty \thanks{Part of this work was done while
visiting Google Research. Department of Computer and Information Science. University of Pennsylvania,Philadelphia, PA. Email {\tt tanmoy@cis.upenn.edu}}
\and 
Eyal Even-Dar \thanks{Google Research,
76 Ninth Ave, New York, NY. Email: {\tt evendar@google.com}}
\and Sudipto Guha\thanks{Part of this work was done while visiting
Google Research. Department of Computer and Information Science. University of Pennsylvania,Philadelphia, PA. Email {\tt sudipto@cis.upenn.edu}}
\and Yishay Mansour\thanks{Google Israel and The Blavatnik School of Computer Science, Tel-Aviv University, Tel-Aviv, Israel, Email {\tt mansour.yishay@gmail.com}}
\and S. Muthukrishnan\thanks{Google Research,
76 Ninth Ave, New York, NY. Email: {\tt muthu@google.com}}}

\maketitle

\begin{abstract}
We design algorithms for computing approximately revenue-maximizing {\em sequential posted-pricing mechanisms (SPM)} in {\em $K$-unit auctions}, in a standard {\em Bayesian model}. A seller has $K$ copies of an item to sell, and there are $n$ buyers, each interested in only one copy, who have some value for the item. The seller must post a price for each buyer, the buyers arrive in a sequence enforced by the seller, and a buyer buys the item if its value exceeds the price posted to it. The seller does not know the values of the buyers, but have Bayesian information about them. An SPM specifies the ordering of buyers and the posted prices, and may be {\em adaptive} or {\em non-adaptive} in its behavior.

The goal is to design SPM in polynomial time to maximize expected revenue. We compare against the expected revenue of optimal SPM, and provide a polynomial time approximation scheme (PTAS) for both non-adaptive and adaptive SPMs. This is achieved by two algorithms: an efficient algorithm that gives a $(1-\frac{1}{\sqrt{2\pi K}})$-approximation (and hence a PTAS for sufficiently large $K$), and another that is a PTAS for constant $K$. The first algorithm yields a non-adaptive SPM that yields its approximation guarantees against an optimal adaptive SPM -- this implies that the {\em adaptivity gap} in SPMs {\em vanishes} as $K$ becomes larger.
\end{abstract}

\section{Introduction}
We consider the following \SPPprob problem in a {\em $K$-unit auction}. There is a single seller with $K$ identical copies of a single item to sell, to $n$  prospective buyers. Each buyer is interested in exactly one copy of the item, and has a value for it that is unknown to the seller. The buyers arrive in a sequence, and each buyer appears exactly once. The arrival order may be chosen by the seller. The seller quotes a price for the item to each arriving buyer, and may quote different prices to different buyers. Assuming that buyers are rational, a buyer buys the item if the price quoted to him is less than his value for the item, and pays the quoted price to the seller. This process stops when either $K$ buyers have bought the item or when all buyers have arrived and left.

We focus on {\em pricing and ordering strategies} in the above model, called {\em sequential posted-price mechanisms ({\bf SPM}s)},  that maximize the seller's expected revenue. Posted price mechanisms are clearly incentive compatible, and commonly used in practice. We design strategies in a {\em Bayesian framework}, where each buyer draws his value of the item from a distribution.  These value distributions are known to the seller, and are used in designing the mechanism.

SPMs were recently studied in the general context of {\em Bayesian single-parameter mechanism design (BSMD)}, which includes our $K$-unit auction, by Chawla et. al. \cite{CHMS10}. They designed efficiently computable SPMs for various classes of BSMD problems and compared their expected revenue to that of the optimal auction mechanism, which was given by Myerson \cite{M81}. For the $K$-unit auction, they showed that their SPM guarantees $(1-1/e)$-fraction of the revenue obtained by Myerson's auction. Bhattacharya et. al. \cite{BGGM10} (as well as \cite{CHMS10}) also used sequential item pricing to approximate optimal revenue, when the seller has multiple distinct items. However, the SPM computed by their algorithms may {\em not} be the optimal SPM, {\em i.e.} there may exist SPMs with greater expected revenue. Given that SPMs are quite common in practice, we focus in this paper on efficiently computing an optimal SPM.

\paragraph{Our Results}
The results in \cite{CHMS10} immediately imply a $(1-1/e)$-approximation for the problem of computing optimal SPMs in $K$-unit auction. We strictly improve this bound. We design two different algorithms -- the first is a polynomial time algorithm that gives $(1 - \frac{1}{\sqrt{2\pi K}})$-approximation, and is meant for large values of $K$, and the second is a polynomial time approximation scheme (PTAS) for constant $K$. Combining these two algorithms yield a polynomial time approximation scheme for the optimal SPM problem, for all values of $K$: if $K>\frac{1}{2\pi \eps^2}$, run the first algorithm, else run the second algorithm. Recall that a PTAS is an algorithm that, for any given constant $\eps>0$, yields $(1-\eps)$-approximation in polynomial time (the exponent of the polynomial should be a function of $\eps$ only, and independent of input size).

Note that a sequential posted pricing strategy can be adaptive -- it can alter its prices and the ordering of the remaining buyers based on whether the current buyer buys the item. We shall call such strategies as {\em Adaptive SPMs, or {\bf ASPM}s}, while SPM shall refer to a non-adaptive pricing and ordering strategy. Clearly, the expected revenue from an optimal ASPM is at least that from an optimal SPM. Our first algorithm outputs an SPM, but our proof shows that it gives the same approximation guarantee of $(1 - \frac{1}{\sqrt{2\pi K}})$ against an optimal ASPM.  This yields a corollary that the {\em adaptivity gap asymptotically vanishes as $K$ increases}. On the other hand, it is easy to construct instances with $K=2$, such that there is a constant factor {\em adaptivity gap}, {\em i.e.} gap in expected revenue between optimal SPM and ASPM. We design a third algorithm that outputs an ASPM, and is a PTAS for computing an optimal ASPM, for constant $K$. Again, combining this result with our first algorithm, we obtain a PTAS for the optimal ASPM problem, for all values of $K$. Adaptive PTAS with multiplicative approximation is rare to find in stochastic optimization problems. For example, an adaptive PTAS for the stochastic knapsack problem has been developed very recently \cite{BGK10}. The theorem below summarizes our results.

\vspace*{-0.0cm}\begin{theorem}
There is a PTAS for computing a revenue-maximizing SPM in $K$-unit auctions, for all $K$. The same result holds for ASPMs.
\end{theorem}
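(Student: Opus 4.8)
The statement packages three component algorithms, so I would prove it by establishing each and then combining. The first is a polynomial-time algorithm achieving a $(1-\frac{1}{\sqrt{2\pi K}})$-approximation even against the optimal \emph{adaptive} SPM. I would base it on the natural LP relaxation: for each buyer $i$ let $R_i(\cdot)$ be the concave closure of $i$'s single-buyer revenue curve, so $R_i(q)$ is the largest expected payment obtainable from $i$ by a (possibly randomized) posted price that $i$ accepts with probability $q$; then consider $\max \sum_i R_i(q_i)$ subject to $\sum_i q_i \le K$ and $q_i\in[0,1]$. This upper-bounds even an ASPM's revenue: the price offered to $i$ depends only on history, which is independent of $v_i$, so conditioning on history and applying Jensen to the concave $R_i$ bounds $i$'s expected contribution by $R_i(q_i)$ where $q_i$ is $i$'s overall sale probability, and $\sum_i q_i\le K$ since at most $K$ units sell. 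The algorithm solves the LP for $q_i^\star$, posts to each buyer the price realizing sale probability $q_i^\star$, and orders buyers by \emph{decreasing} price.

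Next I would bound the cost of the ``stop after $K$ sales'' rule relative to the LP value $\sum_i R_i(q_i^\star)$. Let $B_i$ indicate that $i$ would buy in isolation ($\E{B_i}=q_i^\star$, independent across $i$) and $X_{<i}=\sum_{j<i}B_j$ in the chosen order; then $i$ contributes $R_i(q_i^\star)\Pr{X_{<i}<K}$, so the loss is $\sum_i R_i(q_i^\star)\Pr{X_{<i}\ge K}$. Writing $R_i(q_i^\star)=\pi_i q_i^\star$, independence gives $\sum_i q_i^\star\Pr{X_{<i}\ge K}=\E{(X-K)^+}$ for $X=\sum_i B_i$, a sum of independent Bernoullis with mean $\le K$. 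Since the ordering makes $\pi_i$ non-increasing while $\Pr{X_{<i}\ge K}$ is non-decreasing in $i$, the weighted Chebyshev sum inequality bounds the loss by roughly $\frac{\E{(X-K)^+}}{K}$ times the LP value, and a binomial mean-deviation estimate gives $\E{(X-K)^+}\le\sqrt{K/(2\pi)}$ — yielding the $(1-\frac{1}{\sqrt{2\pi K}})$-approximation, and in particular a $(1-\eps)$-approximation for SPM and ASPM once $K>\frac{1}{2\pi\eps^2}$.

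It remains to handle $K=O(1/\eps^2)$, for which I would design PTASes directly. The plan is to show that a $(1-\eps)$-optimal SPM (resp.\ ASPM) lies in a polynomially-sized structured family: discretize candidate prices onto a geometric grid (a $(1-\eps)$ factor loss), and exploit that at most $K=O(1)$ units ever sell, so the process's relevant ``state'' — essentially the count of units already sold — takes only $O(1)$ values. In the non-adaptive case, after explicitly enumerating the constantly many highest-revenue buyers and their relative order and routing the remaining ``small'' buyers through the LP-rounding above, this should reduce to a polynomial-size search (dynamic programming/enumeration); the adaptive case additionally requires that a near-optimal decision tree admit a polynomial-size description despite being a priori exponential. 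The theorem then follows: given $\eps$, run the first algorithm if $K>\frac{1}{2\pi\eps^2}$, else the matching constant-$K$ PTAS. The first algorithm's analysis is the clean part (modulo a small case analysis when the LP leaves capacity unused, where the loss is anyway negligible); the genuine obstacle is the constant-$K$ PTAS — above all the adaptive one, where one must compress an exponential decision tree to polynomial size while simultaneously controlling price discretization and an unbounded population of small buyers, all within a multiplicative $(1-\eps)$ loss, which is exactly what makes such adaptive PTASes scarce.
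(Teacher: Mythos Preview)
Your large-$K$ argument is essentially the paper's. The LP you write (maximize $\sum_i R_i(q_i)$ subject to $\sum_i q_i\le K$) is equivalent to the paper's $\lpkspp$ after eliminating the $y$-variables, and your SPM (post the price realizing $q_i^\star$, order by decreasing price) coincides with the paper's structured solution. Your analysis is a pleasant variant: the paper first reduces to the equal-price case by an exchange argument and then to Poisson by splitting; your weighted Chebyshev step accomplishes the first reduction in one line, since it directly gives $\text{loss}/\text{LP}\le \E{(X-K)^+}/\sum_i q_i^\star$. One caution: the sharp bound $\E{(X-K)^+}\le \sqrt{K/(2\pi)}$ for a sum of independent Bernoullis with mean $K$ is not a ``binomial mean-deviation estimate'' off the shelf --- it is exactly the Poisson-is-extremal step the paper proves via splitting. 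The easy estimate $\E{(X-K)^+}=\tfrac12\E{|X-K|}\le\tfrac12\sqrt{\mathrm{Var}(X)}\le\tfrac12\sqrt{K}$ gives $1-\tfrac{1}{2\sqrt{K}}$, which is a worse constant but still yields the PTAS threshold; just be explicit about which you are claiming.

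The constant-$K$ part, however, is a genuine gap, and the sketch you give does not work as written. You propose to ``enumerate the constantly many highest-revenue buyers'' and then ``route the remaining small buyers through the LP-rounding above.'' But the LP-rounding you just analyzed loses a $\Theta(1/\sqrt{K})$ factor, which for constant $K$ is a \emph{constant} loss, not $(1-\eps)$; you cannot reuse it to handle the small buyers inside a PTAS. More fundamentally, the contribution of a small buyer depends on \emph{where} it sits relative to the big buyers (through the discount factor), so after guessing big buyers you still face a nontrivial placement problem for an unbounded number of small ones. The paper's route is quite different: it does \emph{not} guess which buyers are big. Instead it chops any near-optimal SPM into $C=\mathrm{poly}(K/\eps)$ \emph{permutable segments} (each a single heavy buyer or a block of total success probability at most $\delta$), observes that the discount factors of all segments are determined up to $(1\pm\eps)$ by the rounded weight profile (``configuration''), enumerates the $\mathrm{poly}(K/\eps)^{O(C)}$ configurations, and for each solves a GAP-style assignment (their $\extgap$) to fill the segments. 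For ASPMs the same machinery applies once you prove a structural lemma (the paper's Lemma~\ref{truncated-aspm}) bounding the number of branching nodes on each spine via ``pivotal nodes''; you correctly identify this compression of the decision tree as the crux, but you do not supply it, and it is precisely what turns the sketch into a proof.
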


\paragraph{Our Techniques}
The first algorithm is based on a linear programming (LP) relaxation of the problem, such that the optimal solution to the LP upper bounds the expected revenue from any ASPM. We show that this LP has an optimal integral solution, from which we construct a pricing for the buyers. The buyers are ordered simply in decreasing order of prices -- it is easy to see that this is an optimal ordering policy given the prices. The LP formulation implies that if there were no limit on the number of copies the seller can sell, then the expected revenue obtained from this pricing would be equal to the LP optimum, and {\em at most $K$ copies of the item are sold in expectation}. However, the algorithm is restricted to selling at most $K$ copies in all realizations, and the result follows by bounding the loss due to this hard constraint. The interesting property we find is that {\em this loss vanishes as $K$ increases}. It should be noted that an LP-based approach is used in \cite{BGGM10}; however, they consider a more general problem with multiple distinct items, and their analysis yielded no better than constant approximation factors.

The second algorithm uses a dynamic programming approach, which is common in the design of approximation schemes. We make some key observations that reduce the problem to an {\em extended version of the generalized assignment problem (GAP) \cite{ST93,CK05} with constant number of bins}, which has polynomial time algorithm (polynomial in the size of bins and number of items) using dynamic programming. The main observation is that in any SPM, if we pick a contiguous subsequence of buyers to whom there is very small probability of selling even a single copy, and arbitrarily permute this subsequence, the resulting SPM will have almost the same expected revenue as the original SPM. This observation drastically cuts down the number of configurations that we have to check before finding a near-optimal SPM.

The third algorithm for computing ASPM is a generalization of the second algorithm, but it must now approximate a decision tree, that may branch at every step based on whether a copy is bought, instead of an SPM sequence. The key observation in this case is that there exists a near-optimal decision tree that does not branch too often, and the problem again reduces to an extension of GAP with constant number of bins.

\paragraph{Other Related Work}
Sequential item pricing for combinatorial auctions has also been studied in prior-free settings, where no knowledge about the buyers' valuation is assumed ({\em eg.} \cite{BBM08,CHK09}). These results compare the revenue obtained to the optimal social welfare, primarily due to lack of a better upper bound, and get no better than logarithmic approximation results. Maximizing welfare via truthful mechanisms in prior-free settings have been studied for $K$-unit auctions \cite{DD09,DN07} and other combinatorial auctions \cite{DNS06,F06}. Bayesian assumptions provide better upper bounds, and has led to constant approximation against optimal revenue for any auction \cite{BGGM10,CHMS10}. But Bayesian assumptions can lead to tighter upper bounds on optimal sequential pricing, and that is our main contribution. A parallel posted-price approach has been used in a more complex repeated ad auction setting to get constant approximation \cite{CEGMM10}.

\section{Preliminaries}\label{prelim}

In a $K$-unit auction, there is a single seller who has $K$ identical copies of a single item, and wish to sell these copies to $n$ prospective buyers $B_1,B_2\ldots B_n$. Each buyer $B_i$ is interested in one copy of the item, and has value $v_i$ for it. $v_i$ is drawn from a distribution specified by cumulative distribution function (cdf) $F_i$ that is known to the seller. The values of different buyers are {\em independently} drawn from their respective distributions. Without loss of generality, we assume that $K\leq n$.

\vspace*{-0.0cm}\begin{definition}
Let $\p_{iv}$ denote the probability that $B_i$ has value $v$ for the item. Let $\tp_{iv}$ denote the probability that $B_i$ has value at least $v$. We shall call it the {\em success probability} when $B_i$ is offered price $v$. Clearly $\tp_{iv} = \sum_{v' \geq v} \p_{iv'}$. 
\end{definition}

We assume, for all our results, that each value distribution is discrete, with at most $L$ distinct values in its {\em support} ({\em i.e.} these values have non-zero probability mass). Let $U_{V_i}$ be the support set of values for the distribution of $B_i$, and let $U_V=\bigcup_{i=1}^n U_{V_i}$. We shall also assume that $L$ is polynomial in $n$, and that $\tp_{iv}$ is an integral multiple of $\frac{1}{10n^2}$ for all $i,v$. These assumptions are without loss of generality for obtaining PTAS for optimal SPM or ASPM (see Appendix \ref{discrete} for a brief discussion).

\vspace*{-0.0cm}\begin{definition}
A sequential posted-price mechanism (SPM) is a mechanism which considers buyers arrive in a sequence, and offers each of them a take-it-or-leave-it price: the buyer may either buy a copy at the quoted price or leave, upon which the seller  makes an offer to another buyer. Each buyer is given an offer at most once, and the process ends when either all $K$ copies have been sold, or there is no buyer remaining. 

An SPM specifies the entire sequence of buyers and prices before the process begins. In contrast, an adaptive sequential posted-price mechanism (ASPM) may decide the next buyer based on which of the current and past buyers accepted their offered prices. 
\end{definition}

Note that there can be no adaptive behavior when $K=1$, since the process stops with the first accepted price.
Thus an ASPM can be specified by a {\em decision tree}: each node of the tree contains a buyer and a price to offer. Each node may have multiple children. The selling process starts at the root of the tree ({\em i.e.} offers the price at the root to the buyer at the root), and based upon whether a sale occurs at the root, moves to one of the children of the root, and continues inductively. The process stops when either $K$ items have been sold, or $n$ buyers have appeared on the path in the decision tree traversed by the process -- the latter nodes are the leaves of the decision tree.

 It is easy to see that the decision of an optimal ASPM at any node of the tree should depend only on the number of copies of the item left and the remaining set of buyers (the latter is solely determined by the node reached by the process). Thus, each node has at most $K$ children, at most one each for the number of copies left. Note that an ASPM may not adapt immediately to a sale -- it may move to a fixed buyer regardless of the outcome. Such a node will only have a single child. Without loss of generality, we shall represent an ASPM such that each non-leaf node either has a single child or $K$ children (some of which may even be infeasible). The latter nodes are called {\em branching nodes}. In this context, an SPM is simply an ASPM whose decision tree is a path.

SPM and ASPM are incentive compatible: a buyer $B_i$ buys the item if and only if its value $v_i$ is equal to or greater than the price offered to it, and pays only the quoted price to the seller. 

\vspace*{-0.0cm}\begin{definition}
The revenue $R(v_1,v_2\ldots v_n)$ obtained by the seller for a given SPM is the sum of the payments made by all the buyers, which is a function of the valuations of the buyers. The expected revenue of an SPM or ASPM is computed over the value distributions $\Exp{v_i\sim F_i}{R(v_1,v_2\ldots v_n)}$. An {\em optimal} SPM or ASPM is an SPM (respectively, ASPM) that gives the highest expected revenue among all SPMs (respectively, ASPMs). 

Let the expected revenue of an optimal SPM (or ASPM) be $\opt$. An $\alpha$-approximate SPM (or ASPM, respectively), where $\alpha\leq 1$, has expected revenue at least $\alpha \opt$.
\end{definition}



\subsection{Basic Result}\label{basic-results}
An SPM must specify an ordering of the buyers as well as the prices to offer to them. It is worth noting that if either one of these tasks is fixed, the other task becomes easy. 
\vspace*{-0.0cm}\begin{lemma}\label{basic-easy}
Given take-it-or-leave-it prices to offer to the buyers, a revenue-maximizing SPM with these prices simply considers  buyers in the order of decreasing prices. Given an ordering of buyers, one can compute in polynomial time a revenue-maximizing ASPM that uses this ordering (and only adapts the offered prices).
\end{lemma}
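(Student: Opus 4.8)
The plan is to prove the two halves separately: the first by an adjacent-transposition exchange argument, and the second by backward-induction dynamic programming.

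For the first statement, fix the assignment of a price $p_i$ to each buyer $B_i$ and write $q_i=\tp_{i,p_i}$ for the corresponding success probability. I would take any permutation in which two \emph{adjacent} buyers appear in the order $B_i$ then $B_j$ with $p_i<p_j$, and show that swapping them to the order $B_j,B_i$ cannot decrease the expected revenue; applying this repeatedly (bubble sort) then shows the decreasing-price order is optimal, with ties broken arbitrarily. To carry this out I would condition on the entire random history of the process up to the moment the first of the two positions is reached. Since only buyers placed strictly before the pair influence this history, and the prefix and suffix are unchanged by the swap, the history has the same distribution under the two orderings, the expected revenue collected before the pair is unaffected, and (by independence of the remaining buyers' values) the expected revenue of the continuation after the pair depends on the past only through the number of copies remaining when the pair ends. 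A short case analysis then finishes it: if the pair is reached with $m\ge 2$ copies, the joint distribution of (revenue from the pair, copies remaining afterwards) is identical in the two orderings; if it is reached with $m=1$, the probability that a copy remains after the pair equals $(1-q_i)(1-q_j)$ in \emph{both} orderings (so the continuation contributes equally in expectation), while the expected revenue collected from the pair itself is $q_ip_i+q_jp_j-q_iq_jp_j$ in order $(i,j)$ versus $q_ip_i+q_jp_j-q_iq_jp_i$ in order $(j,i)$, a difference of $q_iq_j(p_j-p_i)\ge 0$ in favour of offering the higher price first; if $m=0$ the pair is never reached and both contribute zero. Summing over the conditioning gives the claim.

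For the second statement, fix the buyer order, say $B_1,\dots,B_n$ after relabelling, and for $1\le i\le n+1$, $0\le k\le K$ let $V_i(k)$ be the maximum expected revenue obtainable from $B_i,\dots,B_n$ when $k$ copies remain, over all price-adaptive strategies respecting this order; thus $V_{n+1}(k)=0$ and $V_i(0)=0$. I would first observe that at $B_i$ it suffices to consider prices in $U_{V_i}$ together with a ``null price'' that no buyer accepts, since raising any offered price up to the smallest support value that is at least as large keeps the success probability unchanged while weakly increasing the payment. The Bellman recursion is then
\[
V_i(k)=\max_{p\in U_{V_i}\cup\{\infty\}}\Big(\tp_{ip}\big(p+V_{i+1}(k-1)\big)+(1-\tp_{ip})V_{i+1}(k)\Big),
\]
with the convention $\tp_{i\infty}:=0$. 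Because the only decision an ASPM with a fixed buyer order can adapt is the price offered to the current buyer as a function of how many copies have already been sold, this recursion exactly computes the optimal such ASPM, and recording the maximizing price at each pair $(i,k)$ reconstructs its decision tree (a node being a branching node exactly when the $k-1$ and $k$ arguments call for different prices). The table has $O(nK)$ entries, each evaluated by a maximum over $O(L)=\mathrm{poly}(n)$ prices, so the computation runs in polynomial time.

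I expect the only delicate point to be the coupling in the first part: one must verify that the swap leaves the \emph{distribution} of the post-pair state unchanged, not merely the expected revenue of the pair, which is precisely why the argument hinges on the identity $\Pr[\text{a copy remains after the pair}]=(1-q_i)(1-q_j)$ independently of order in the $m=1$ case. Everything else is routine bookkeeping.
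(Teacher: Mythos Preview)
Your proposal is correct and follows essentially the same approach as the paper: an adjacent-swap exchange argument for the first claim and backward dynamic programming for the second. Your treatment of the swap is more explicit than the paper's (which couples on realizations and simply notes that the two orderings differ only when both buyers would accept), and your inclusion of a null price $\infty$ in the DP is a harmless refinement the paper omits, but the underlying arguments are the same.
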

\begin{proof}
For the first claim, consider an SPM where there are two buyers $B_i$ and $B_j$, such that $B_i$ arrives just before $B_j$, but is offered a lower price than $B_j$. Consider the modified SPM created by swapping $B_i$ and $B_j$ in the order, while keeping the price offered to $B_i, B_j$ and other buyers unchanged. In realizations where at most one of $B_i$ or $B_j$ accepts the price, the revenue of the original and modified SPMs are equal. However, in realizations where both buyers accept their offered prices, the selling process may not reach the latter buyer, and so the modified SPM has higher or equal revenue in that case.

For the second claim, we can compute the prices using dynamic programming. Let the buyers be ordered as $B_{\pi(1)},B_{\pi(2)}\ldots B_{\pi(n)}$. Let $A(i,j)$ denote the maximum expected revenue that can be obtained from the {\em last} $i$ buyers in the given ordering, if there are $j$ items left to sell to them. For initialization, set $A(1,0)=0$, and $A(1,j)=\max_{x\in U_V} x\Pr{v_n\geq x}$ for $j\geq 1$, which is the maximum expected revenue from $B_n$ with the item in stock. Suppose $A(i-1,j)$ has been computed for all $j$. Then $A(i,j)$ can be computed by iterating through all possible prices to offer $B_{n-i+1}$, and pick one that yields highest expected revenue. For a price $x\in U_V$, the expected revenue is $(x + A(i-1,j-1))\Pr{v_{n-i+1}\geq x} + A(i-1,j)\Pr{v_{n-i+1}< x}$. Finally, $A(n,K)$ is the expected revenue. Note that we could store, along with $A(i,j)$, the optimal price $p$, and these prices yield the required ASPM.
\end{proof}

\section{LP-based Algorithm for Large $K$}\label{sec-LPalgo}

In this section we present our first algorithm that yields us an approximation factor that improves as $K$ increases, and implies a vanishing adaptivity gap. The following theorem summarizes our result.

\vspace*{-0.0cm}\begin{theorem}\label{thm:LPalgo}
For all $K \geq 1$, if a seller has $K$ units to sell, there exists an SPM whose expected revenue is at least $1 - \frac{K^K}{K! e^K} \geq 1 - \frac{1}{\sqrt{2\pi K}}$ fraction of the optimal ASPM. This SPM can be computed in polynomial time.
\end{theorem}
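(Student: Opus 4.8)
The plan is to set up a linear programming relaxation whose optimum upper-bounds the expected revenue of any ASPM, show it has an optimal integral solution, turn that solution into prices, and then account for the loss incurred by the hard constraint of having only $K$ copies.

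First I would write the LP. The natural variables are $x_{iv}$ for each buyer $B_i$ and each value $v$ in its support, intended to indicate ``offer price $v$ to $B_i$''. The constraint $\sum_{v} x_{iv} \le 1$ says each buyer gets at most one price. The expected number of copies sold, if every buyer who accepts actually gets a copy, is $\sum_{i,v} x_{iv}\,\tp_{iv}$, and we constrain this to be at most $K$. The objective is $\sum_{i,v} x_{iv}\,v\,\tp_{iv}$, the expected revenue in this relaxed world with unlimited supply. The first thing to argue is that this LP is a valid relaxation: given any ASPM (decision tree), for each buyer $B_i$ and price $v$ let $x_{iv}$ be the probability (over value draws) that the ASPM reaches a node offering price $v$ to $B_i$. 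Then $\sum_v x_{iv} \le 1$ since the ASPM offers $B_i$ at most one price on any root-to-leaf path, the expected-revenue identity holds because a sale at such a node contributes $v$ with conditional probability $\tp_{iv}$, and the expected number of sales is at most $K$ since the process stops after $K$ sales --- so the ASPM's expected revenue equals the LP objective at this feasible point. Hence the LP optimum is at least $\opt_{\text{ASPM}}$.

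Next I would establish integrality. The LP has a transportation-like / assignment structure: it is essentially a fractional matching between buyers and ``price slots'' with a single global capacity $K$. I would argue the constraint matrix (or a suitable reformulation) is such that a vertex solution is integral --- e.g. by observing that we can first fix, for each buyer, the choice of which value to offer, and that the only coupling is the single knapsack-type constraint $\sum x_{iv}\tp_{iv}\le K$; a standard argument (or a direct exchange/uncrossing argument on a fractional optimum) shows there is an optimal solution that is integral except possibly at one buyer, and since $\tp_{iv}$ values are integer multiples of $\frac{1}{10n^2}$ and $K$ is an integer, one can in fact round to a fully integral solution without loss, or simply absorb the single fractional buyer. (This is where I would be careful; see below.) Call the resulting integral solution a choice of price $p_i$ for each buyer in a selected set $S$, with $\sum_{i\in S}\tp_{ip_i} \le K$ and $\sum_{i\in S} p_i\tp_{ip_i}$ equal to (at least) the LP optimum.

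Now I would define the SPM: offer price $p_i$ to each $B_i\in S$ and, by Lemma \ref{basic-easy}, order them in decreasing order of $p_i$. In the unlimited-supply world this SPM earns exactly $\sum_{i\in S}p_i\tp_{ip_i} \ge \opt_{\text{ASPM}}$, and the expected number of acceptances is $\mu := \sum_{i\in S}\tp_{ip_i} \le K$. The real SPM loses revenue only on realizations where more than $K$ buyers accept: since buyers are processed in decreasing price order, if the $(K+1)$-st-or-later acceptance occurs we fail to collect those (smallest) prices. The key quantitative step is to bound this loss. I would bound the lost revenue by (price of the marginal buyer) times (expected overflow), but more cleanly: because of the decreasing-price order, the revenue actually collected dominates the revenue collected if we truncated each buyer's contribution, and a clean way is to show the fraction of revenue retained is at least $\Pr{\text{Binomial-type sum} \le K}$-like quantity, then bound the worst case. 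The extremal case is all $\tp$ concentrated so that acceptances behave like a sum of indicators with total mean $K$; the probability of overflow is maximized, and a calculation shows the retained fraction is at least $1 - \frac{K^K}{K!e^K}$ (the Poisson$(K)$ tail at $K$, which is the limiting worst case), and Stirling gives $\frac{K^K}{K!e^K}\le\frac{1}{\sqrt{2\pi K}}$.

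The main obstacle I expect is the last step: rigorously identifying the worst-case instance for the ratio ``expected revenue under the $K$-cap'' over ``expected revenue with no cap'', and showing it is governed by the Poisson tail $\frac{K^K}{K!e^K}$. One must argue that concentrating all weight onto buyers with equal prices (so the revenue and the acceptance count scale together) is worst, that a sum of Bernoulli indicators with fixed mean $K$ is dominated in the relevant sense by Poisson$(K)$, and that the relevant functional $\Pr{X \ge K+1}$-weighted loss is maximized there --- handling the non-i.i.d. Bernoulli case and the fact that overflow past the cap loses a \emph{varying} number of the \emph{cheapest} prices, not a fixed amount, requires a careful coupling or a convexity/rearrangement argument. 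The integrality argument for the LP is a secondary obstacle but should follow from standard polyhedral combinatorics or a direct exchange argument.
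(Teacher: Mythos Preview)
Your proposal is correct and follows essentially the same approach as the paper: an LP relaxation upper-bounding any ASPM, a structured optimal solution with at most one fractional buyer (the paper obtains this via a Lagrangian argument after infinitesimal perturbation, not by invoking discretization or general polyhedral combinatorics, and it keeps the single fractional buyer rather than rounding it away), and the decreasing-price SPM built from the chosen prices. For the approximation-factor obstacle you flagged, the paper carries out precisely the program you anticipate: a local-exchange argument (Claim~\ref{bernoulli}) reduces to the case of equal prices so that $Z' = \min\{Z,K\}$ for a sum $Z$ of Bernoulli variables, a splitting argument (Claim~\ref{iid}) shows that refining each Bernoulli into several smaller ones with the same total mean can only decrease $\E{\min\{Z,K\}}$, and then the Poisson limit of i.i.d.\ Bernoulli$(1/N)$ variables with mean $K$ yields the bound $1 - K^K/(K!\,e^K)$.
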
 

As a first step to our algorithm, we add random infinitesimal perturbation to the values $v\in U_{V_i}$ and the associated probability values $\p_{iv}$, so that almost surely, $U_{V_i}$ are disjoint, and further, all the values and probabilities are in {\em general position}. Intuitively, this property is used in our algorithm to break ties.

Consider any ASPM $\mathcal{P}$, that may even be randomized. Consider the event $E_{iv}$ that $B_i$ is offered the item at price $v$, and accepts the offer. Let $y_{iv}$ denote the probability of that $E_{iv}$ occurs when $\mathcal{P}$ is implemented. Let $x_{iv}$ denote the probability that $B_i$ was offered price $v$ when $\mathcal{P}$ is implemented.  Note that both probabilities are taken over the value distributions of the buyers, as well as internal randomization of $\mathcal{P}$. Naturally, we must have $y_{iv}  \leq \tp_{iv} x_{iv}$. Also, by linearity of expectation, the expected revenue obtained by $\mathcal{P}$ is $\sum_{i=1}^n  \sum_{v\in U_V} v y_{iv}$. Moreover, $\sum_{i=1}^n  \sum_{v\in U_V}  y_{iv}$ is the expected number of copies of the item sold by the seller, and this quantity must be at most $K$. Finally, the mechanism enforces that  each buyer is offered a price at most once in any realization, and hence in expectation,{\em i.e.}  $\sum_{v\in U_V} x_{iv}\leq 1$.

Viewing $x_{iv}$ and $y_{iv}$ as variables depending upon the selected ASPM, optimum of the following linear program $\lpkspp$ provides an upper bound to the expected revenue from any ASPM, since any ASPM provides feasible assignment to the variables. Our algorithm involves computing an optimal solution to this program with a specific structure, and use the solution to construct an SPM. We also consider its dual program, $\dkspp$.

\begin{tabular}{c|c}
\begin{minipage}{3in}
{\small 
\begin{eqnarray*}
& \lpkspp = & \max \sum_{i=1}^n \sum_{v\in U_V} v y_{iv}  \\
& y_{iv} & \leq \tp_{iv} x_{iv}\ \ \ \ \forall i\in[1,n],v\in U_V\\
& \sum_{v\in U_V}  x_{iv} & \leq 1\ \ \ \ \ \ \ \ \ \ \forall i\in [1,n]\\
& \sum_{i=1}^n \sum_{v\in U_V} y_{iv} & \leq K \\
& & y_{iv},x_{iv} \geq 0
\end{eqnarray*}
}
\end{minipage}
& 
\begin{minipage}{3in}
{\small
\begin{eqnarray*}
& \dkspp = & \min  K \tau + \sum_i \lambda_i \\
& & \zeta_{iv} + \tau \geq v \\
& & \lambda_i - \sum_v \tp_{iv} \zeta_{iv} \geq 0 \\
& & \zeta_{iv},\lambda_i,\tau \geq 0
\end{eqnarray*}
}
\end{minipage}
\end{tabular}

\vspace*{-0.0cm}\begin{lemma}\label{structured}
Assuming that the points in $U_{V_i}$ and the probabilities $\tp_{iv}$ have been perturbed infinitesimally, and so are in general position, there exists an optimal {\em structured solution} $x^*_{iv}, y^*_{iv}$ of $\lpkspp$, computable in polynomial time, such that:
\begin{enumerate}
\item for all $i,v$, $y_{iv}=\tp_{iv} x_{iv}$.
\item for each $i$ there is exactly one $v$ such that $x_{iv}>0$. Let $v(i)$ denote the value for which $x_{iv(i)}>0$.
\item There exists at most one $i$ such that $1\leq i\leq n$ and $x_{iv(i)}=1$. If such $i=i'$ exists, then $v(i')=\min_{i=1}^n v(i)$. 
\end{enumerate}
\end{lemma}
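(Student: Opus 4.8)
The plan is to start from an arbitrary optimal solution of $\lpkspp$ and reshape it into a structured one, enforcing the three properties in succession without ever changing the objective value; polynomial-time computability is then immediate, since every step is a local adjustment or a single LP vertex computation. Property~1 is essentially free: the objective of $\lpkspp$ involves only the variables $y_{iv}$, with positive coefficients (using $v>0$ after the perturbation), and each $x_{iv}$ occurs only in the constraints $y_{iv}\le\tp_{iv}x_{iv}$ and $\sum_v x_{iv}\le1$. Hence, given any optimum, whenever $y_{iv}<\tp_{iv}x_{iv}$ I may lower $x_{iv}$ to $y_{iv}/\tp_{iv}$: the objective and all $y$-values are unchanged, $x_{iv}\ge0$ is preserved, and the only affected constraint, $\sum_v x_{iv}\le1$, is merely relaxed. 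This yields an optimum with $y_{iv}=\tp_{iv}x_{iv}$ for all $i,v$, and substituting this relation reduces $\lpkspp$ to $\max\sum_{i,v}v\tp_{iv}x_{iv}$ over $P\cap H$, where $P=\{x\ge0:\sum_v x_{iv}\le1\ \forall i\}$ and $H=\{x:\sum_{i,v}\tp_{iv}x_{iv}\le K\}$.

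For Property~2 I would compute an optimal vertex of $P\cap H$. The polytope $P$ is the independent-set polytope of the partition matroid whose parts are the buyers, hence integral, and intersecting an integral polytope with one halfspace has the standard effect that every vertex of $P\cap H$ either is a $0/1$ point of $P$ or lies on an edge of $P$ together with the bounding hyperplane of $H$. An edge of this partition-matroid polytope changes the assignment of exactly one buyer, so at an optimal vertex every buyer but at most one receives a single price with probability~$1$, and the one exceptional buyer receives a single price with some probability in $(0,1)$ (or, in a non-generic situation, two prices). General position is used precisely here: it makes the optimal vertex unique and non-degenerate, and it rules out the coincidence that would leave the exceptional buyer with two tied prices, so each buyer ends up with exactly one value $v(i)$ --- a buyer that receives no offer being, harmlessly, regarded as offered an arbitrarily high price with probability~$0$.

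Property~3 then follows from complementary slackness against the dual $\dkspp$. If a buyer $i$ receives its price with probability strictly between $0$ and $1$, its constraint $\sum_v x_{iv}\le1$ is slack, so its dual variable vanishes; feeding this into the dual inequalities forced to be tight because $x_{iv(i)},y_{iv(i)}>0$ yields $v(i)=\tau$, the dual variable of the global capacity constraint. A buyer receiving its price with probability~$1$ instead satisfies $v(i)\ge\tau$ (strictly, once the values have been made distinct), and a buyer receiving no offer has its entire support at most $\tau$. Distinctness of the values then permits at most one buyer with $v(i)=\tau$ --- exactly the exceptional buyer of Property~2 --- and that buyer carries the smallest price among all buyers that are offered a price.

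The step I expect to be the crux is the middle one: showing that a single global capacity constraint can fractionalize the assignment of at most one buyer. This rests on the integrality of the partition-matroid polytope together with general position, and the delicate point is running the bookkeeping so that it simultaneously identifies the unique fractional buyer and pins that buyer down to a single price rather than two; Properties~1 and~3 are short once this structure is in hand.
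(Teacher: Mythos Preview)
Your proposal is correct and arrives at the same structured optimum, but by a genuinely different route than the paper. Where you argue via the vertex structure of the partition-matroid polytope cut by one halfspace and then invoke complementary slackness, the paper instead passes to the Lagrangian $\lagspp(\tau)$ obtained by dualizing the single capacity constraint. At the optimal multiplier $\tau^*$ (read off from $\dkspp$), $\lagspp(\tau^*)$ decomposes into $n$ independent one-buyer problems $\max_v (v-\tau^*)\tp_{iv}$ subject to $\sum_v x_{iv}\le 1$; the paper sets $v(i)=\arg\max_v(v-\tau^*)\tp_{iv}$ with $x_{iv(i)}=1$ whenever the maximum is strictly positive, and identifies the at-most-one fractional buyer as the unique $i'$ with maximum equal to zero, i.e.\ with $\tau^*\in U_{V_{i'}}$, then tunes $x_{i'v(i')}$ to hit the capacity exactly. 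Property~3 is immediate from this construction because $v(i')=\tau^*$ while every other $v(i)>\tau^*$. The two routes are really dual views of the same picture: your $\tau$ is exactly the paper's Lagrange multiplier, and your complementary-slackness identities $\lambda_i=(v(i)-\tau)\tp_{iv(i)}$ are the paper's per-buyer argmax conditions rewritten. What the paper's approach buys is a more explicitly constructive recipe (compute $\tau^*$, then a linear scan over buyers); what yours buys is that it packages the ``one global constraint forces at most one fractional coordinate'' step as a clean polyhedral fact. Both proofs lean on the same appeal to general position to pin the exceptional buyer to a single price rather than two, which you rightly single out as the delicate point.
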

\begin{proof}
Given any feasible solution to $\lpkspp$, where $y_{iv} < \tp_{iv} x_{iv}$ for some $i,v$, we can simply reduce $x_{iv}$ till $y_{iv}$ becomes equal to $\tp_{iv}x_{iv}$. This change keeps the solution feasible, and also leaves the objective unchanged. So we can simply eliminate the variables $y_{iv}$ from $\lpkspp$ by setting $y_{iv}=\tp_{iv} x_{iv}$. An optimal solution to this modified LP will also be an optimal solution for the original LP, and naturally satisfy the first condition in the lemma. By a minor abuse of notation, we refer to this modified LP as $\lpkspp$.

Let us now consider the Lagrangian program $\lagspp(\tau)$ obtained by removing the constraint of selling at most $K$ copies, and associating a cost $\tau$ of violating this constraint in the objective.  The following property holds by LP duality: let $\tau^*$ be the assignment to variable $\tau$ in an optimal solution to $\dkspp$. Then the optimum of $\lagspp(\tau^*)$ is equal to the optimum of $\lpkspp$ in value. We shall compute an optimal solution of $\lagspp(\tau^*)$ that is also feasible for $\lpkspp$, and satisfies either $\tau^*=0$ or $\sum_i \sum_v \tp_{iv}x_{iv} = K$. Such a solution must also be an optimal solution of $\lpkspp$.

\vspace*{-0.0cm}
{\small 
\begin{eqnarray*}
&\hspace*{-0.5cm} \lagspp(\tau)  =& \hspace*{-0.2cm} \max \left(\sum_i \sum_{v} v \tp_{iv} x_{iv} +\tau(K - \sum_i \sum_v \tp_{iv}x_{iv})\right) = K \tau + \sum_i \max \sum_{v} (v-\tau) \tp_{iv} x_{iv}  \\
& &\sum_v x_{iv} \leq 1 \\
& &y_{iv},x_{iv} \geq 0
\end{eqnarray*}
}
\vspace*{-0.0cm}


%



If $\tp_{iv}=0$, then we assume that $x_{iv}$ is set to zero, since this does not affect feasibility or value of the objective. Let an optimal solution of $\lagspp(\tau^*)$ be denoted by $x_{iv}^*(\tau^*)$. Such a solution must satisfy $x_{iv}^*(\tau^*)=0$ for all $v<\tau^*$. Further for some $i$, if $\max_v (v-\tau^*) \tp_{iv}$ is maximized at a unique $v$, then $x_{iv}^*(\tau^*)=1$ if $v= \arg \max_{v\in U_{V_i}} \{ (v-\tau^*) \tp_{iv} | v \geq \tau^* \}$, and $0$ otherwise. If $\max_v (v-\tau^*) \tp_{iv} > 0$, then the added perturbations ensure that the maximum is indeed unique.

Suppose the maximum is zero for some $i$, then $\tp_{iv}=0$ and so $x_{iv}^*(\tau^*)=0$ $\forall v> \tau^*$. Since every buyer has some non-zero probability of having a positive value for the item (else we can simply neglect such buyers), we have $\tau^*>0$. The only $x_{iv}^*(\tau^*)$ that we may set to a non-zero value is for $v=\tau^*$, provided that $\tau^*\in U_{V_i}$. Because of added perturbations, this can happen for at most one buyer $i$. We first fix the assignment of all other variables as described above, then set this $x_{iv}^*$ to the highest value less than $1$ such that $\sum_i \sum_v \tp_{iv}x_{iv}) \leq K$. This gives our required structured solution. Given $\tau^*$, constructing the solution requires {\em linear time}. As mentioned before, $\tau^*$ can be computed by solving $\dkspp$; one may also use binary search techniques for this purpose, similar to many packing LPs (details omitted, see, {\em eg.} \cite{R96}).
\end{proof}

Our algorithm for computing an SPM is as follows:
\begin{enumerate}
\item Compute an optimal structured solution of $\lpkspp$.
\item In the SPM, offer price $v(i)$ to $B_i$, and consider buyers in order of decreasing $v(i)$.
\end{enumerate}

\subsection{Approximation Factor}

It remains to analyze the approximation factor of our algorithm. Let the order of decreasing prices be $B_{\pi(1)},B_{\pi(2)}\ldots B_{\pi(n)}$. For $1\leq i<n$, let $Z_i$ be a two-valued random variable that is $v(\pi(i))=z_i$ with probability $\tp_{\pi(i) v(\pi(i))}=u_i$, and $0$ otherwise. To define $Z_n$, note that $x_{\pi(n)v(\pi(n))}^*$ in the structured optimal solution may not have been $1$, so let $Z_n$ be $v(\pi(n))=z_n$ with probability $x_{\pi(n)v(\pi(n))}^* \tp_{\pi(n) v(\pi(n))}=u_n$ and $0$ otherwise. If $Z=\sum_{i=1}^n Z_i$, then $\E{Z}$ is the optimum of the LP solution. The revenue of the algorithm, however, is at least equal to the sum of the first $K$ variables in the sequence $Z_1,Z_2\ldots Z_n$ that are non-zero. Let this sum be denoted by the random variable $Z'$. Note that $z_1\geq z_2 \geq \ldots \geq z_n$, and $\sum_{i=1}^n u_i\leq K$. The following lemma immediately implies Theorem \ref{thm:LPalgo}.

\vspace*{-0.0cm}\begin{lemma}\label{approxfactor}
$\E{Z'}\geq (1 - \frac{K^K}{K! e^K})\E{Z} \geq (1 - \frac{1}{\sqrt{2\pi K}})\E{Z}$.
\end{lemma}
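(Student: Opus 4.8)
The plan is to reduce the claim to a clean probabilistic statement about sums of independent two-valued random variables with a total "mass budget" of $K$, and then to identify the worst case. Recall $Z=\sum_{i=1}^n Z_i$ with the $Z_i$ independent, $Z_i\in\{0,z_i\}$, $\Pr{Z_i=z_i}=u_i$, $\sum_i u_i\le K$, and $z_1\ge z_2\ge\cdots\ge z_n$; and $Z'$ is the sum of the first $K$ nonzero terms in the sequence $Z_1,\ldots,Z_n$ (all of them if fewer than $K$ are nonzero). We want $\E{Z'}\ge (1-\frac{K^K}{K!e^K})\E{Z}$. Writing $\E{Z-Z'}=\sum_i z_i\,\Pr{Z_i=z_i \text{ and at least } K \text{ of } Z_1,\ldots,Z_{i-1} \text{ are nonzero}}$, we see that a term is "lost" only when it is nonzero \emph{and} strictly more than $K-1$ of the earlier terms already fired. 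Since the $z_i$ are nonincreasing, the loss is largest, relative to $\E{Z}$, when mass is pushed toward the front; I expect the extremal configuration to have all $z_i$ equal and as many unit-probability variables as the budget allows, i.e. $K$ deterministic copies (each contributing surely), plus possibly extra small-probability tail mass — but the structured LP solution (Lemma~\ref{structured}) already guarantees at most one $u_i=1$, so the true worst case within our instances is even milder. Nonetheless, proving the bound for \emph{all} admissible $(z_i,u_i)$ is cleanest.

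The key steps, in order: (1) Fix the $z_i$'s and argue by an exchange/smoothing argument that, to minimize $\E{Z'}/\E{Z}$, we may assume all $z_i$ are equal — if two distinct values appear, shifting probability mass from a later (smaller-$z$) coordinate to an earlier (larger-$z$) coordinate, keeping $\sum u_i$ fixed, only increases the loss fraction, because it both increases the chance the front fills up and reroutes mass onto a larger $z$. (2) With all $z_i$ equal (say to $1$), $\E{Z'}=\E{\min(N,K)}$ and $\E{Z}=\E{N}$ where $N=\sum_i \mathrm{Bernoulli}(u_i)$ with $\E{N}=\sum u_i\le K$; so we must show $\E{\min(N,K)}\ge (1-\frac{K^K}{K!e^K})\E{N}$, i.e. $\E{(N-K)^+}\le \frac{K^K}{K!e^K}\E{N}$. (3) Show the left side is maximized, over all independent Bernoulli collections with mean at most $K$, by the Poisson-like extreme, and in fact by the single Poisson-type profile: one clean way is first to reduce to $\E{N}=K$ exactly (adding mass only increases $\E{(N-K)^+}$ faster than $\E{N}$), then use a convexity/Hoeffding-style argument that among Bernoulli sums with fixed mean $K$, splitting into many tiny Bernoullis maximizes $\E{(N-K)^+}$, giving the Poisson$(K)$ bound $\E{(N-K)^+}\le \sum_{j\ge K}(j-K)e^{-K}K^j/j! = K\,\Pr{\mathrm{Poisson}(K)=K}= \frac{K^{K+1}e^{-K}}{K!}$, so $\E{(N-K)^+}/K \le \frac{K^K}{K!e^K}$, which is exactly the claimed constant. (4) Finally invoke Stirling's inequality $K!\ge \sqrt{2\pi K}\,(K/e)^K$ to conclude $\frac{K^K}{K!e^K}\le \frac{1}{\sqrt{2\pi K}}$.

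The main obstacle I anticipate is step (3): pinning down that the Poisson profile is genuinely extremal among all independent Bernoulli sums of a given mean. The identity $\E{(N-K)^+} = \sum_{j\ge 1}\Pr{N\ge K+j}$ together with the fact that each tail probability is a convex/ordered functional under the relevant "splitting" operation is the route I would take; alternatively one can compare directly via a coupling that majorizes any Bernoulli sum by a Poisson of the same mean in the convex order (a standard fact, e.g. via Le Cam-type arguments), since $x\mapsto (x-K)^+$ is convex. Steps (1), (2), and (4) are routine exchange arguments and Stirling; the only real content is the Poisson extremality in (3), and I would be prepared to fall back on the weaker but sufficient route of directly bounding $\E{(N-K)^+}$ for the specific structured solution (where at most one $u_i=1$ and all others are multiples of $\frac{1}{10n^2}$, hence "spread out"), which makes the Poisson comparison essentially tight without needing the full generality.
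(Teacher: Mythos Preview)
Your overall outline---reduce to equal $z_i$, then to a Bernoulli sum, then to the Poisson extreme, then Stirling---is exactly the paper's route. Two remarks, one critical and one complimentary.

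\textbf{Step (1) as you describe it does not work.} You propose to shift probability mass $u_i$ forward (from a later, smaller-$z$ coordinate to an earlier, larger-$z$ one) while keeping $\sum_i u_i$ fixed. But the reaching probabilities $f(i,\vec u)=\Pr{\text{fewer than }K\text{ of }Z_1,\ldots,Z_{i-1}\text{ fire}}$ depend on $\vec u$, so moving $u$-mass changes every $f(j,\vec u)$ simultaneously, and it also changes $\E{Z}=\sum_i z_i u_i$; the net effect on the ratio is not monotone in any obvious way. The paper's exchange goes the other direction: it \emph{fixes} $\vec u$ (hence all the $f(i,\vec u)$) and moves \emph{contribution} $\alpha_i=z_i u_i$ from index $i$ to $i+1$, keeping $\sum_i\alpha_i=\E{Z}$ constant. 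Since $\E{Z'}=\sum_i f(i,\vec u)\alpha_i$ with $f$ strictly decreasing in $i$, this shift can only lower $\E{Z'}$, and iterating equalizes all $z_i$. Equivalently: with $\vec u$ fixed, $\E{Z'}/\E{Z}$ is a weighted average of the $f(i,\vec u)$ with weights $z_iu_i$, and under the monotonicity constraint $z_1\ge\cdots\ge z_n$ the minimum is attained at constant $z$. Your conclusion is right; the exchange you wrote down is not the one that proves it.

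\textbf{Step (3): your convex-order route is a genuine (and cleaner) alternative.} The paper proves Poisson extremality by an explicit splitting claim (replacing one Bernoulli by two whose means sum to the original only decreases $\E{\min(N,K)}$), then rationalizes, splits into $KN$ i.i.d.\ Bernoulli$(1/N)$, and invokes a total-variation bound to pass to Poisson$(K)$. Your suggestion to use directly that any sum of independent Bernoullis is dominated in convex order by a Poisson of the same mean, together with convexity of $x\mapsto(x-K)^+$, yields $\E{(N-K)^+}\le \E{(P-K)^+}$ in one stroke; the identity $\E{(P-K)^+}=K^{K+1}/(K!\,e^K)$ for $P\sim\mathrm{Poisson}(K)$ and Stirling then finish as in the paper. (For $\sum_i u_i<K$, note that $\E{(P_\mu-K)^+}/\mu$ is increasing in $\mu$ by convexity of $\mu\mapsto\E{(P_\mu-K)^+}$ and its vanishing at $\mu=0$, so the bound at $\mu=K$ suffices.)
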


%
%
%
\vspace*{-0.0cm}\begin{proof}
Let $\alpha(i)=z_i u_i$. Let the probability that we reach $Z_i$ in the sequence before finding $K$ non-zero variables, be given by the function $f(i,\vec{u})$ (this function is independent of $z_1,z_2\ldots z_n$), where $\vec{u}=(u_1,u_2\ldots u_n)$. Then $\E{Z'}=\sum_{i=1}^n 
f(i,\vec{u}) \alpha(i)$, while $\E{Z}=\sum_{=1}^n \alpha(i)$. Observe that $f(i,\vec{u})$ is monotonically decreasing in $i$.
We shall narrow down the the instances on which $\E{Z'}/ \E{Z}$ is minimized.

\vspace*{-0.0cm}\begin{claim}\label{bernoulli}
Given an instance comprising variables $Z_1,Z_2\ldots Z_n$ such that $z_i>z_{i+1}$, one can modify it to construct another instance $\tilde{Z_1},\tilde{Z_2}\ldots\tilde{Z_n}$ such that $\E{Z'}/ \E{Z}$ decreases.
\end{claim}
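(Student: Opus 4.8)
The plan is to prove the claim by a smoothing (exchange) argument that leaves the success probabilities $\vec u=(u_1,\dots,u_n)$ untouched and only flattens the value vector $\vec z=(z_1,\dots,z_n)$. Recall the identities already established: $\E{Z'}=\sum_i f(i,\vec u)\,\alpha(i)$ and $\E{Z}=\sum_i \alpha(i)$, where $\alpha(i)=z_i u_i$ and the coefficients $f(1,\vec u)\ge f(2,\vec u)\ge\cdots$ are independent of $\vec z$. Thus $\E{Z'}/\E{Z}$ is precisely the $\alpha$-weighted average of the non-increasing sequence $f(\cdot,\vec u)$, so it can only shrink if we shift $\alpha$-mass from small indices (large $f$) toward large indices (small $f$). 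The monotonicity constraint $z_1\ge z_2\ge\cdots\ge z_n$ is what prevents us from doing this freely, and the cheapest legal reshuffling is to make $\vec z$ more uniform.

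Concretely, I would let $i^*$ be the smallest index with $z_{i^*}>z_{i^*+1}$ (it exists by hypothesis), so that $z_1=\cdots=z_{i^*}$, and define the new instance by $\tilde u_i=u_i$ for all $i$, $\tilde z_i=z_i$ for $i>i^*$, and $\tilde z_i=z_{i^*+1}$ for $i\le i^*$. Then $\tilde z$ is still non-increasing and $\sum_i\tilde u_i\le K$, so this is again a legal instance; and since the set of variables counted toward $Z'$ depends only on $\vec u$, we still have $\E{\tilde Z'}=\sum_i f(i,\vec u)\,\tilde\alpha(i)$ with $\tilde\alpha(i)=\tilde z_i u_i$. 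Put $\delta_i=\alpha(i)-\tilde\alpha(i)$, so $\delta_i=(z_{i^*}-z_{i^*+1})u_i$ for $i\le i^*$ and $\delta_i=0$ otherwise; the key point is that on the prefix $\delta_i$ is proportional to $\alpha(i)$. Clearing denominators, $\E{\tilde Z'}/\E{\tilde Z}\le \E{Z'}/\E{Z}$ is equivalent to $\frac{\sum_{i\le i^*} f(i,\vec u)\,\delta_i}{\sum_{i\le i^*}\delta_i}\ge \E{Z'}/\E{Z}$; by the proportionality the left-hand side equals the prefix-restricted average $\frac{\sum_{i\le i^*} f(i,\vec u)\alpha(i)}{\sum_{i\le i^*}\alpha(i)}$, and since $f(\cdot,\vec u)$ is non-increasing this prefix average is at least the suffix average over $i>i^*$, so the full average $\E{Z'}/\E{Z}$, being a convex combination of the two, is at most the prefix average. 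This yields the desired inequality $\E{\tilde Z'}/\E{\tilde Z}\le \E{Z'}/\E{Z}$.

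For the strict decrease, the step above is strict unless the prefix and suffix averages of $f(\cdot,\vec u)$ coincide, which forces all positively-weighted indices to share the same value of $f(\cdot,\vec u)$; since $f(i,\vec u)=1$ exactly for $i\le K$ and (after the generic perturbation) is strictly smaller beyond, this can happen only when the cap of $K$ never binds, i.e.\ when $\E{Z'}=\E{Z}$ and the ratio is already $1\ (\ge 1-\frac{K^K}{K!e^K})$, the trivial case. Outside it the modification strictly decreases the ratio, and since each application strictly lengthens the constant prefix of $\vec z$, iterating terminates at an instance with $z_1=\cdots=z_n$, the canonical form the remaining argument analyzes. I expect the delicate part to be exactly this last paragraph---pinning down when $f(\cdot,\vec u)$ is constant and checking that the only obstruction to a strict drop is the already-trivial case $\E{Z'}=\E{Z}$---together with the observation that $\delta_i\propto\alpha(i)$ on the prefix, which is what collapses the key ratio to the clean prefix average.
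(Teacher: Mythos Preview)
Your argument is correct and reaches the same conclusion, but by a different route than the paper. The paper's proof keeps $\E{Z}$ \emph{fixed}: it shifts a carefully chosen amount $\Delta=\frac{z_i-z_{i+1}}{1/u_i+1/u_{i+1}}$ of $\alpha$-mass from index $i$ to index $i+1$ (again leaving $\vec u$ unchanged), so that $\tilde z_i=\tilde z_{i+1}$ while $\sum_j\alpha'(j)=\sum_j\alpha(j)$. Then the change in $\E{Z'}$ is simply $(f(i+1,\vec u)-f(i,\vec u))\,\Delta\le 0$, and the ratio drops because the numerator drops while the denominator is unchanged. Your construction instead lowers the entire constant prefix down to $z_{i^*+1}$, shrinking both numerator and denominator, and you recover the inequality via the prefix-versus-global weighted-average comparison; the observation that $\delta_i\propto\alpha(i)$ on the prefix is exactly what collapses the comparison to ``prefix average $\ge$ full average,'' which follows from monotonicity of $f$. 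The paper's mass-transfer is a one-line computation once $\Delta$ is written down; your version avoids the magic constant $\Delta$ at the cost of the extra averaging step.

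Both approaches in fact only establish a \emph{weak} decrease in general (in the paper's version $f(i,\vec u)=f(i+1,\vec u)$ whenever $i<K$, so the displayed strict inequality there is not literally true for every $i$), and weak monotonicity is all the enclosing lemma needs to pass to the equal-$z$ case. You are actually more careful than the paper on this point: you correctly isolate the only obstruction to a strict drop as the degenerate situation where all positively-weighted $f(i,\vec u)$ coincide, hence equal $f(1,\vec u)=1$, i.e.\ $\E{Z'}=\E{Z}$, where the target bound holds trivially. One small cleanup: your aside that ``$f(i,\vec u)=1$ exactly for $i\le K$'' is not quite right if some earlier $u_j$ vanish and is not needed; your argument only uses $f(1,\vec u)=1$, which always holds.
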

\begin{proof}
We modify the instance by defining $\alpha'(j)$ for $1\leq j\leq n$, and setting the possible non-zero value of $\tilde{Z_j}$ to be $\tilde{z_j}=\alpha'(j)/u_j$, with success probability remaining $u_j$:
{\small
\[
\alpha'(j) = \left \{
\begin{array}{ll}
\alpha(j) & \mbox{ if $j \neq i,i+1$}\\
\alpha(i) - \Delta & \mbox{ if $j = i$}\\
\alpha(i+1) + \Delta & \mbox{ if $j = i+1$}
\end{array} \right.
\quad \quad \mbox{where}
\quad \quad \Delta = \frac{z_i - z_{i+1}}{
  \frac1{u(i)} +\frac1{u(i+1)}} > 0 \]
}

Note that $\tilde{Z_j}=Z_j\ \forall j\neq i,i+1$, so only $Z_i$ and $Z_{i+1}$ gets modified. Further, $\tilde{z_j}$ are non-increasing in $j$ (in fact, $\tilde{z_i}=\tilde{z_{i+1}}$) so the modified instance is valid. Also, $\sum_i \alpha(i) =\sum_i \alpha'(i)$, so $\E{Z}$ remains unchanged. $\vec{u}$ remains unchanged too, and hence the probabilities $f(i,\vec{u})$. Finally, the change in $\E{Z'}$ is $\left(f(i+1,\vec{u})-f(i,\vec{u})\right) \Delta < 0$, {\em i.e.} $\E{Z'}$ decreases.
\end{proof}

Thus, we can restrict our attention to instances where $z_1=z_2=\ldots =z_n =z^*$ (say). Without loss of generality, we let $z=1$, so that $Z_1,Z_2\ldots$ are Bernoulli variables, and $Z' = \min\{Z,K\}$. Note that the ordering of the variables do not influence $Z'$. The next step is to show that if we {\em split} the variables, keeping $\E{Z}$ unchanged, $\E{Z'}$ can only decrease.

\vspace*{-0.0cm}\begin{claim}\label{iid}
Let $Z_1,Z_2\ldots Z_n$ be Bernoulli variables, such that the {\em success probability} is $\Pr{Z_j=1}=u_j$. Suppose that we modify the set of variables by removing $Z_i$ from it and adding two Bernoulli variables $\tilde{Z}_i$ and $\hat{Z}_i$ to it, where $\Pr{\tilde{Z}_i=1}=\tu_i > 0$ and $\Pr{\hat{Z}_i=1}=\hu_i > 0$, and $\tu_i + \hu_i = u_i$. Then $\E{Z'}=\E{\min{Z,K}}$ decreases or remains unchanged due to this modification, while $\E{Z}=K$ remains unchanged.
\end{claim}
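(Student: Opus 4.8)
The plan is to prove Claim~\ref{iid} by a coupling argument that directly compares $\E{\min\{Z,K\}}$ before and after the split. Let $S$ denote the sum of the variables other than $Z_i$ (i.e. $S = \sum_{j\neq i} Z_j$), so that before the modification $Z = S + Z_i$ and after it $\tilde Z = S + \tilde Z_i + \hat Z_i$. Since $S$ is untouched and independent of the variables being modified in both cases, it suffices to fix an arbitrary value $s$ of $S$ and show $\E{\min\{s+Z_i,K\}} \geq \E{\min\{s+\tilde Z_i+\hat Z_i,K\}}$; averaging over $s$ then gives the claim. Because $\E{Z_i}=\E{\tilde Z_i+\hat Z_i}=u_i$, the two conditional expectations of the \emph{unclipped} sums agree, so the inequality is purely a statement about the concavity of the clipping map $t\mapsto\min\{t,K\}$ interacting with the fact that $\tilde Z_i+\hat Z_i$ is a ``more spread out'' (in the convex order) random variable than $Z_i$.

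Concretely, I would first handle the trivial ranges: if $s\geq K$ both sides equal $K$ regardless, and if $s+2 \le K$ (more precisely $s \le K-2$, using integrality of $s$) then $\min\{s+\cdot,K\}$ is the identity on the relevant range $\{0,1,2\}$ and both sides equal $s+u_i$. So the only interesting case is $s = K-1$: then $\min\{s+0,K\}=K-1$, $\min\{s+1,K\}=K$, and $\min\{s+2,K\}=K$. Here $\E{\min\{s+Z_i,K\}} = (K-1)(1-u_i) + K u_i = K-1+u_i$, whereas $\E{\min\{s+\tilde Z_i+\hat Z_i,K\}} = (K-1)(1-\tu_i)(1-\hu_i) + K\bigl(1-(1-\tu_i)(1-\hu_i)\bigr) = K-1 + \bigl(\tu_i + \hu_i - \tu_i\hu_i\bigr) = K-1 + u_i - \tu_i\hu_i$. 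Since $\tu_i,\hu_i>0$, the latter is strictly smaller, which is exactly what we want. Summing the per-$s$ inequalities weighted by $\Pr{S=s}$ yields $\E{\tilde Z'}\le\E{Z'}$, and it is a strict decrease precisely when $\Pr{S=K-1}>0$.

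An alternative, perhaps cleaner, packaging is to invoke a general convexity fact: if $X$ and $Y$ have the same mean and $Y$ dominates $X$ in the convex (Choquet) order, then $\E{\psi(X+s)}\ge\E{\psi(Y+s)}$ for every concave $\psi$, hence in particular for $\psi=\min\{\cdot,K\}$; and the split of a single Bernoulli into the sum of two independent Bernoullis with the same total mean is a standard instance of convex-order domination. But since the only nontrivial case is the two-point computation above, I would likely just present the elementary case analysis rather than develop the convex-order machinery. The main obstacle is essentially bookkeeping: being careful that $S$ is genuinely independent of $\{Z_i\}$ and of $\{\tilde Z_i,\hat Z_i\}$ (which holds because all the $Z_j$ are mutually independent and the split is done with fresh independent randomness), and isolating the single value $s=K-1$ where the clipping is ``active'' on exactly one of the two possible increments — everything else collapses trivially. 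Once that case is identified the inequality is immediate.
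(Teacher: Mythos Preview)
Your proposal is correct and essentially identical to the paper's own proof: the paper also conditions on the sum $X=\sum_{j\neq i}Z_j$, observes that the cases $X\le K-2$ and $X\ge K$ leave $\E{\min\{Z,K\}}$ unchanged, and in the sole nontrivial case $X=K-1$ computes $K-1+u_i$ versus $K-1+(\tu_i+\hu_i-\tu_i\hu_i)$. The convex-order remark is extra but harmless.
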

\begin{proof}
Let $X$ be the sum of the remaining variables, {\em i.e.} $X=\sum_{j=1}^{i-1}Z_j + \sum_{j=i+1}^n Z_j$. We shall show that $\E{Z'|X\leq K-2}$ and $\E{Z'|X\geq K}$ remain unchanged by the modification, while $\E{Z'|X=K-1}$ decreases, thus proving .

If $X\leq K-2$, then $Z'$ is $X + Z_i$ in the original instance, and $X+\tilde{Z}_i + \hat{Z}_i$ in the modified instance. Since $\E{Z_i} = \E{\tilde{Z}_i + \hat{Z}_i}= u_i$, so $\E{Z'|X\leq K-2}$ remains unchanged. Also, if $X\geq K$, then $Z'$ is simply $K$ in both instances. If $X=K-1$, then $Z' = K-1 + Z_i$ in the original instance and $Z'=K-1 + \min\{1,\tilde{Z}_i + \hat{Z}_i\}$ in the modified instance. So $\E{Z'|X=K-1} = K-1+u_i$ and $\E{Z'|K-1} = K-1 + \Pr{\tilde{Z}_i + \hat{Z}_i \geq 1} = K-1 + (\tu_i + \hu_i - \tu_i\hu_i) < K-1+u_i$, respectively.
\end{proof}

Assume that the success probabilities of the Bernoulli variables are all rational -- since rational numbers form a dense set in reals, this shall not change the lower bound we are seeking. Then, there exists some large integer $N$ such that all the probabilities are integral multiples of $1/N$. Further, we can choose an arbitrarily large $N$ for this purpose. Now, split each variable that has success probability $t/N$ into $t$ variables, each with success probability $1/N$. The above claim implies that $\E{Z'}/\E{Z}$ can only decrease due to the splitting. Thus, it remains to lower bound $\E{Z'}/K$ for the following instance, as $N\ra \infty$: $KN$ Bernoulli variables, each with success probability $1/N$.

For this final step, we use the well-known property that the sum of Bernoulli variables with infinitesimal success probabilities approach the {\em Poisson} distribution with the same mean. In particular, if $P$ is a Poisson variable with mean $K$, then the total variation distance between $Z$ and $P$ is at most $(1-e^{-K})/N$ (see {\em e.g.} \cite{BH84}), which tends to zero as $N\ra \infty$. Thus, we simply need to find $\E{\min{P,K}}/K$, and this is the lower bound on $\E{Z'}/\E{Z}$ that we are seeking. It can be verified that $\E{\min{P,K}}= K(1 - \frac{K^K}{K!e^K})$ (see Appendix \ref{poisson}), which proves the lemma.
\end{proof}

\section{PTAS for constant K}\label{ptas-small}

We now define an optimization problem called $\extgap$, and our PTAS for both SPM and ASPM for constant $K$ will reduce to solving multiple instances of this problem.

\medskip
\noindent {\bf $\extgap$:} Suppose there are $n$ objects, and each object has $L$ {\em versions}. Let version $j$ of object $i$ have profit $p_{ij}$ and size $s_{ij}\leq 1$. Also, suppose there are $C$ bins $1,2\ldots C$, where bin $\ell$ has size $s_\ell$ and a {\em discount factor} $\gamma_\ell$ . The goal is to {\em place} versions of objects to bins, such that:
\begin{enumerate}
\item Each object can be placed into a particular bin at most once, as a unique version. If object $i$ is placed as version $j$ into bin $\ell$, then it realizes a profit of $\gamma_\ell p_{ij}$ and a size of $s_{ij}$. 
\item Each object can appear in multiple bins, as different versions. However, there is a given collection $\mathcal \F_C$ of {\em feasible subsets} of bins $1,2\ldots C$. The set of bins that an object is placed into must be a feasible subset.
\item The sum of realized sizes of objects placed into any bin $\ell$ must be less than $s_\ell$.
\end{enumerate}
The profit made by an assignment of object version to bins, that satisfy all the above conditions, is the sum of realized profits by all objects placed in the bins. The goal is to find an assignment that maximizes the profit.
\vspace*{-0.0cm}\begin{lemma}\label{lem-extgap}
For all objects and versions $i,j$, let $s_{ij}$ be a multiple of $1/M$ for some fixed $M\geq 2$. Then an optimal solution to $\extgap$ can be found in time $(ML)^{O(C)} n$.
\end{lemma}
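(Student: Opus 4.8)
The plan is to solve $\extgap$ by dynamic programming over the bins, one bin at a time, where the state records how much capacity has been consumed in each bin so far. Because every size $s_{ij}$ is a multiple of $1/M$ and each bin has size $s_\ell \leq$ (at most a constant times $M$, after scaling, since sizes are bounded), the amount used in any single bin takes at most $O(M)$ distinct values; with $C$ bins the joint capacity state lies in a set of size $M^{O(C)}$. The objects are processed one at a time: for object $i$, we enumerate which feasible subset $S \in \F_C$ of bins it goes into and, for each bin $\ell \in S$, which version $j$ it takes there. This is at most $2^C \cdot L^C = (L)^{O(C)}$ choices per object (using $2^C = O(1)$ since $C$ is a parameter folded into the exponent), and each choice advances the capacity state in each affected bin by the relevant $s_{ij}$. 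We keep, for each reachable capacity vector, the maximum total profit achievable using the first $i$ objects, and transition to object $i+1$.

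First I would set up the state space precisely. Rescale so that sizes are integers in $\{0,1,\dots,\lceil M s_\ell\rceil\}$ for bin $\ell$; let $M_\ell = \lceil M s_\ell \rceil$, and note $M_\ell = O(M)$ since $s_\ell \le 1$ (or more generally bounded by a constant). A state is a tuple $(c_1,\dots,c_C)$ with $0 \le c_\ell \le M_\ell$, so the number of states is $\prod_\ell (M_\ell+1) = (M)^{O(C)} = (ML)^{O(C)}$. I would define $T_i(c_1,\dots,c_C)$ as the maximum profit obtainable by legally placing (a subset of versions of) objects $1,\dots,i$ so that bin $\ell$ has exactly $c_\ell$ units of size used. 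Initialize $T_0(\vec 0)=0$ and $T_0 = -\infty$ elsewhere.

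Next I would write the transition. For each state $\vec c$ with $T_i(\vec c)$ finite, and each feasible subset $S \in \F_C$, and each assignment of a version $j_\ell$ to each $\ell \in S$, check that adding $s_{i j_\ell}$ to each $c_\ell$ (for $\ell\in S$) keeps every coordinate $\le M_\ell$; if so, update $T_{i+1}$ at the new state with the value $T_i(\vec c) + \sum_{\ell \in S} \gamma_\ell\, p_{i j_\ell}$, taking the maximum over all ways to reach it. Also allow $S = \emptyset$ (object $i$ placed nowhere), which just copies $T_i$ into $T_{i+1}$. The answer is $\max_{\vec c} T_n(\vec c)$. The work per object is (number of states) $\times$ (number of $(S,\text{versions})$ choices) $= (ML)^{O(C)} \cdot 2^C L^C = (ML)^{O(C)}$, giving total time $(ML)^{O(C)} n$ as claimed. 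Correctness is the standard exchange/induction argument: any feasible global assignment restricted to objects $1,\dots,i$ yields a reachable state with profit $\le T_i$ of that state, and conversely every value of $T_i$ is realized by some feasible partial assignment; the three $\extgap$ constraints are exactly enforced by, respectively, choosing one version per bin per object, restricting $S$ to $\F_C$, and the capacity check $c_\ell + s_{ij_\ell} \le M_\ell$.

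The main obstacle — really the only subtle point — is controlling the size of the state space: one must use that sizes are discretized in units of $1/M$ \emph{and} that bin capacities are bounded (here $s_\ell$ is a constant, indeed the hypothesis $s_{ij}\le 1$ together with the intended application bounds $s_\ell$ by $O(1)$), so that $M_\ell = O(M)$ and the product over the constantly-many bins stays $(ML)^{O(C)}$ rather than exponential in $n$. A secondary point to state carefully is that enumerating feasible subsets costs only $2^C$ and versions only $L^C$, both absorbed into the $(ML)^{O(C)}$ factor since $C$ appears in the exponent; no per-object cost depends on $n$, which is what makes the total linear in $n$. Everything else is bookkeeping.
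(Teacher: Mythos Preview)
Your proposal is correct and essentially identical to the paper's proof: both run a dynamic program over objects with state $(c_1,\dots,c_C)$ recording the capacity used in each bin (multiples of $1/M$, at most $1$), and transition by enumerating a feasible subset from $\mathcal F_C$ and a version per chosen bin. Your hedging about bin capacities is unnecessary---the paper simply takes $j_\ell \le 1$ (consistent with the application, where bin sizes are segment weights $\le \delta < 1$), so $M_\ell \le M$ directly and the state count is $M^{O(C)}$.
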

\begin{proof}
The algorithm is a simple dynamic programming. Order the objects arbitrarily. Let $D(i,j_1,j_2\ldots j_C)$ be an optimal feasible assignment (or the profit thereof, by an abuse of notation) of the first $i$ objects, such that the sum of realized sizes of objects in bin $\ell$ is $j_\ell$, for $\ell=1,2\ldots C$. $D(i,j_1,j_2\ldots j_C)$ is assigned as {\em null} and its profit as $-\infty$ if no such assignment exists. Note that we only consider $j_\ell$ to be multiples of $1/M$ and at most $1$, for all $\ell$.

$D(0,j_1,j_2\ldots j_C)$ is null for all $j_1,j_2\ldots j_C$, except for $D(0,0,0\ldots 0)$ which is zero. Suppose $D(i-1,j_1,j_2\ldots j_C)$ have been computed for all $j_1,j_2\ldots j_C$. Then to compute $D(i,j_1\ldots j_C)$, we first choose a feasible subset of bins from $\mathcal{F}_C$ to place it in ($|\mathcal{F}_C| < 2^C$ choices), then its version in each bin in this subset (at most $L^C$ choices), and then compute the objective as $D(i-1,j_1-s_{i t_1},j_2-s_{i t_2}\ldots j_C - s_{i  t_C}) + \sum_{\ell=1}^C \gamma_\ell p_{i t_\ell}$, where $t_\ell$ is the version in which object $i$ is chosen to be placed in bin $\ell$ (if the object is not placed in bin $\ell$, treat $s_{i t_\ell}$ and $p_{i t_\ell}$ as zero).

We iterate through all the choices to maximizes this objective. Thus, computing each entry $D(i,j_1\ldots j_C)$ takes time at most $O(C(2L)^C)$. The number of entries is at most $n M^C$. The maximum among all the entries gives the required assignment.
\end{proof}

\subsection{PTAS for Computing SPM}\label{sec-ptas-spm}
We now design an algorithm to compute a near-optimal SPM for constant $K$. 

\vspace*{-0.0cm}\begin{theorem}\label{ptas-spm}
There exists a PTAS for computing an optimal SPM, for any constant $K$. The running time of the algorithm is $\left(\frac{nk}{\eps}\right)^{poly(k, \eps^{-1})}$, and gives $(1-\eps)$-approximation.
\end{theorem}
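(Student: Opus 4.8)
The plan is to reduce the problem of computing a near-optimal SPM to a small number of $\extgap$ instances, each solvable by Lemma~\ref{lem-extgap}, and then argue that the best solution found is a $(1-\eps)$-approximation to $\opt$. The central structural observation (foreshadowed in the technique overview) is that in an optimal SPM, buyers to whom the cumulative probability of selling even one copy is tiny can be permuted arbitrarily at negligible revenue loss. Concretely, I would first partition the price scale into $O(\poly(k,\eps^{-1}))$ geometric ``buckets'' $[\theta^{t+1},\theta^t)$ for a suitable $\theta=1-\Theta(\eps/k)$, so that losing all revenue from one bucket costs only an $\eps/k$ fraction of the contribution of higher buckets; since there are at most $k$ copies, we can afford to ``sacrifice'' one entire bucket. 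Because prices cannot be bucketed over an unbounded range, I would separately handle the very top prices (the highest $O(k/\eps)$ candidate prices, which can be guessed/enumerated as the ``expensive'' buyers) and argue the tail buckets contribute a negligible share.

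The key combinatorial step is this. Look at an optimal SPM as a sequence of buyers with posted prices, grouped into contiguous blocks by price bucket (Lemma~\ref{basic-easy} already tells us the optimal ordering is by decreasing price, so blocks are automatically contiguous). Within a block whose buyers each have success probability that is small and whose total expected number of sales is bounded by a constant depending on $\eps,k$, reordering the block changes the probability of ever reaching later blocks by at most $O(\eps/k)$ multiplicatively — because the event ``fewer than the remaining number of copies sold so far'' has probability insensitive to the order when individual success probabilities are small, a claim provable by a coupling/inclusion-exclusion bound analogous to Claim~\ref{iid}. If some block has large expected sales, it must already be near the bottom of the sequence and contributes only to the last few copies; we handle it by guessing the (constantly many) copies remaining when that block is reached. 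After this reduction, an SPM is essentially described by: (i) how many of the $k$ copies are ``allocated'' to each price bucket, (ii) which buyers go into which bucket and at what within-bucket version (price), subject to the expected-sales in a bucket not exceeding its allocation (this is the bin-capacity constraint, after scaling by $1/M$ with $M=\poly(n)$ so that $\tp_{iv}$ values are multiples of $1/M$), and (iii) a residual small block at the tail. This is exactly an $\extgap$ instance: bins are price buckets, bin size is the copy-allocation, the discount factor $\gamma_\ell$ encodes the (order-independent, up to $1\pm O(\eps/k)$) probability of reaching bucket $\ell$ given the allocation to earlier buckets, object versions are the $\le L$ prices a buyer can be offered, and feasible subsets enforce that a buyer is offered a single price (so $\F_C$ is the singletons plus $\emptyset$, or slightly richer to allow the tail block).

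I would then enumerate over all $O(k)$-tuples of copy allocations to buckets (there are $\binom{O(\poly(k,\eps^{-1}))+k}{k}=\poly(k,\eps^{-1})^{O(k)}$ of them), over the guesses for the expensive buyers and the tail block, solve the corresponding $\extgap$ instance via Lemma~\ref{lem-extgap} in time $(ML)^{O(C)}n=(\tfrac{nk}{\eps})^{\poly(k,\eps^{-1})}$, convert the winning assignment back to an SPM (bucket order fixed, within-bucket order by price, tail block in guessed position), and output the best. Correctness follows because the optimal SPM, after the sacrifice-one-bucket and block-permutation arguments, maps to a feasible $\extgap$ solution of profit $\ge(1-\eps)\opt$, and conversely every $\extgap$ solution yields a genuine SPM whose expected revenue is at least its profit (the $\gamma_\ell$ underestimate the true reach probabilities by at most the approximation slack).

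The main obstacle, and the part needing the most care, is making the ``discount factor'' $\gamma_\ell$ well-defined: the true probability of reaching bucket $\ell$ depends not just on the \emph{number} of copies notionally allocated to earlier buckets but on how many were actually sold, which is a random quantity correlated across buckets. The fix is to show that conditioning on the event ``exactly the allocated number of copies were sold in each earlier bucket'' is, up to $(1\pm O(\eps/k))$ factors, equivalent to the unconditional behavior, using that each bucket's expected sales is either small (so concentration/Poissonization makes ``hitting the allocation'' the overwhelmingly likely outcome, cf. the Poisson approximation used in Lemma~\ref{approxfactor}) or is the final large block handled by explicit guessing. Propagating these multiplicative errors across the $\le k$ buckets gives total distortion $(1\pm O(\eps))$, which is absorbed by rescaling $\eps$. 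Verifying that this error control survives the $1/M$-rounding of sizes and the bucketing of prices — i.e. that all the approximations compose additively/multiplicatively into a single $(1-\eps)$ guarantee — is the technical heart of the argument.
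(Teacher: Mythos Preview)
Your proposal has a real gap in the step you yourself flag as ``the main obstacle'': defining the discount factor $\gamma_\ell$ from an integer \emph{copy allocation} to price buckets. The number of copies sold in a bucket is a random variable, not a parameter you get to fix. If a bucket has small expected sales $s\ll 1$, the dominant outcome is \emph{zero} sales, so conditioning on ``exactly the allocated number sold'' (for any positive allocation) is an event of probability $O(s)$, and the multiplicative error blows up rather than staying $(1\pm O(\eps/k))$. If a bucket has expected sales of order $K$, Poisson concentration gives fluctuations of order $\sqrt{K}$, again far too coarse. In short, the enumerated parameter (integer copies per bucket) does not determine the reach probabilities; what determines them is the \emph{weight} (total success-probability mass) of each earlier block together with the individual big-buyer weights, and you never enumerate those. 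A related confusion: you treat ``expensive buyers'' (high offered price) as the ones to isolate, but the structurally dangerous buyers are those with \emph{high success probability}, which can sit at any price level; a single such buyer in a low-price bucket shifts the copies-remaining count by one with constant probability and wrecks any order-insensitivity claim for that block.

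The paper's proof avoids both issues by decomposing not by price but by \emph{weight along the sequence}. It truncates the optimal SPM to total weight $O(K\log(K/\eps))$, then chops it into at most $C=\poly(K,\eps^{-1})$ \emph{permutable segments}, each either a single big buyer or a run of small buyers with total weight at most $\delta=\eps^3/(20K^3)$. Because each segment sells at most one copy except with probability $O(\delta^2)$, the vector $(\gamma_1(S_i),\ldots,\gamma_K(S_i))$ evolves by a simple two-term recurrence that depends only on the segment weights (Lemmas~\ref{dis-eff}--\ref{discount}). A \emph{configuration} is then just the tuple of (rounded) segment weights; for each of the $(1/\tau)^{O(C)}$ configurations one computes the discount factors exactly from this recurrence and solves a single $\extgap$ instance whose bins are the segments, bin capacities are the configured weights, and $\gamma_\ell=A_z(\ell)$. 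No copy-allocation guessing, no conditioning on random sale counts, and big buyers are handled automatically as singleton segments. Your price-bucketing is neither needed nor sufficient: the ordering-by-price from Lemma~\ref{basic-easy} is used only at the very end, after the segment assignment is fixed.
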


We shall, without loss of generality, give a $(1-ck\eps)$-approximation, and this will imply the above theorem: putting $\eps=\eps'/ck$ will yield a $(1-\eps')$-approximation. 

We first establish some definitions that we shall use. Let a {\em segment} refer to a sequence of some buyers and prices offered to these buyers -- we shall refer to parts of an SPM as segments. Let the {\em undiscounted contribution} $\mathcal{V}(B_i)$ of a buyer $B_i$, when offered price $x(B_i)$, be $\alpha(B_i)=x(B_i) \tp_{ix(B_i)}$, while its {\em weight} be $\tp_{ix(B_i)}$, its success probability. Undiscounted contribution $\mathcal{V}(S)$ of a segment $S$ is the sum of undiscounted contributions of buyers in the segment, and the weight of the segment is the sum of their weights. 

Given an SPM, let $dis(B)$ denote the probability that the selling process reaches buyer $B$. The {\em real contribution} of a buyer to the expected revenue is $\alpha(B) dis(B)$, and the expected revenue of the SPM is the sum of the real contributions of all the buyers. More generally, let $\gamma_\ell(B)$ denote the probability that $B_i$ is reached with at least $\ell$ items remaining. Then $dis(B) = \gamma_1(B)$. The discount factor $dis(S)$ of a segment $S$, whose first buyer is $B$, is defined to be $dis(B)$. Similarly, we define $\gamma_\ell(S)=\gamma_\ell(B)$.

We present our algorithm through a series of structural lemmas, each of which follows quite easily from the preceding lemmas.
The first step towards our algorithm is that we can restrict our attention to truncated SPMs.

\vspace*{-0.0cm}\begin{lemma}\label{truncated}
There exists an SPM of total weight at most $K \log \frac{K}{\eps}$, where each buyer has discount factor at least $\eps$, that gives an expected revenue of at least $(1-\eps)\opt$.
We shall refer to SPMs that satisfy this condition as {\em truncated}.
\end{lemma}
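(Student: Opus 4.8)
The plan is to construct a truncated SPM from an optimal SPM by a two-stage ``cutting'' argument: first we bound the total weight by truncating the tail of the sequence, and then we ensure every surviving buyer has discount factor at least $\eps$.

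First I would take an optimal SPM $\mathcal{P}^*$, with buyers in the order $B_1, B_2, \ldots, B_n$ induced by $\mathcal{P}^*$ (by Lemma~\ref{basic-easy} we may assume prices are decreasing, but that is not essential here). For each prefix position $t$, the total weight of the first $t$ buyers is $W_t = \sum_{i=1}^t \tp_{i x(B_i)}$. Observe that if $W_t$ is large, then with high probability the selling process has already sold all $K$ copies before reaching $B_{t+1}$: indeed, the number of copies sold among the first $t$ buyers stochastically dominates a sum of independent Bernoulli variables with total mean $W_t$, so by a Chernoff/Markov-type bound, $dis(B_{t+1}) = \Pr{\text{fewer than } K \text{ sales among first } t} \le e^{-\Omega(W_t/K)}$ once $W_t \gg K$. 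Concretely, picking $t^*$ to be the first position where $W_{t^*} \ge K\log\frac{K}{\eps}$ (roughly), one gets that the probability of ever reaching buyer $B_{t^*+1}$ is at most $\eps$ (after choosing the constant in the threshold appropriately, e.g.\ using that $\E{\#\text{sales among first } t^*} \ge K \log\frac{K}{\eps}$ and a tail bound for sums of independent indicators). Deleting all buyers from position $t^*+1$ onward loses at most the expected revenue conditioned on reaching them times the probability of reaching them; but the total revenue is at most\ldots{}hmm, we need the lost revenue to be at most $\eps\opt$, so I would instead argue that the contribution of the deleted suffix is $\sum_{i>t^*}\alpha(B_i)dis(B_i)$, and since $dis$ is non-increasing along the sequence and $dis(B_{t^*+1})\le\eps$, this sum is at most $\eps \sum_{i>t^*}\alpha(B_i) \le \eps \cdot (\text{total undiscounted contribution})$; one then needs that the total undiscounted contribution of the original SPM is $O(\opt)$ or, more carefully, bound it directly against $\opt$ using that at most $K$ copies sell. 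Actually the cleanest route: the deleted suffix, viewed as a stand-alone SPM on $\le K$ copies, has expected revenue at most $\opt$, and prepending the (high-weight) prefix reaches it with probability $\le\eps$, so the lost revenue is $\le \eps\,\opt$.

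Second, for the surviving prefix, I would iterate: as long as some buyer $B_j$ has $dis(B_j) < \eps$, simply delete $B_j$ (and all buyers after it, if convenient, or just $B_j$ alone) from the SPM. Deleting a single buyer $B_j$ changes the revenue by at most its real contribution $\alpha(B_j)dis(B_j) < \eps\,\alpha(B_j)$ (deleting a buyer can only increase the discount factors of later buyers, so the other terms don't decrease). Summing over all deleted buyers, the total loss is at most $\eps\sum_j \alpha(B_j)$, which again is $O(\eps)\cdot\opt$ by the same accounting as above (the sum of undiscounted contributions over a weight-$O(K\log\frac{K}{\eps})$ segment of buyers, with decreasing prices, is at most $O(\log\frac{K}{\eps})\opt$ or so; combining with rescaling $\eps$ absorbs constants, consistent with the ``$(1-ck\eps)$'' slack the theorem already allows). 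After removing all low-discount buyers, every remaining buyer has discount factor $\ge\eps$, and the total weight is still at most $K\log\frac{K}{\eps}$ since we only removed buyers. This yields the desired truncated SPM.

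The main obstacle I expect is the revenue-accounting in the truncation step: one must show that the lost revenue is genuinely $O(\eps)\cdot\opt$ and not merely a small fraction of the (possibly much larger) total undiscounted contribution $\sum_i\alpha(B_i)$. The right way around this is the observation that any contiguous suffix of an SPM is itself a valid SPM selling at most $K$ copies, hence has expected revenue at most $\opt$; so reaching that suffix with probability $\le\eps$ costs $\le\eps\,\opt$. For the low-discount deletions, one similarly groups consecutive low-discount buyers into segments and bounds each segment's contribution by $\eps$ times (the revenue of that segment as a standalone SPM) $\le \eps\,\opt$, and there are only $O(\log\frac{K}{\eps})$ such ``epochs'' to worry about because each epoch of low-discount buyers, once it accumulates weight $\Omega(1)$, pushes the discount factor down by a constant factor and can never recover — so the constant $c$ and the $\log$ factors are all absorbed into the $(1-ck\eps)$ bound. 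Getting these epoch bounds stated cleanly, rather than the precise constants, is where the real care is needed.
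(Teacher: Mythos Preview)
Your approach is essentially correct but substantially more complicated than necessary, and the accounting in your second stage is not cleanly justified as written.

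The observation you are missing is that the discount factor $dis(B_j)$ is \emph{monotonically non-increasing} along any SPM. This collapses your two stages into one, which is exactly what the paper does: take the smallest prefix of the optimal SPM such that the first buyer of the remaining suffix has discount factor $\le\eps$. By monotonicity, every buyer in this prefix automatically has $dis\ge\eps$, so no separate ``cleanup'' pass is needed. The revenue loss is bounded by the argument you already identified in your first stage --- the deleted suffix, as a standalone SPM on $K$ copies, earns at most $\opt$, and it is reached with probability at most $\eps$, so the loss is at most $\eps\,\opt$. For the weight bound, if the prefix had weight exceeding $K\log\frac{K}{\eps}$, split its first $K\log\frac{K}{\eps}$ mass into $K$ blocks of weight $\log\frac{K}{\eps}$; by Fact~\ref{lb} each block has no sale with probability at most $\eps/K$, so by a union bound all $K$ blocks record a sale with probability at least $1-\eps$, forcing $dis\le\eps$ already inside the prefix, a contradiction.

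Your reversed order (truncate by weight first, then remove low-$dis$ buyers) forces the awkward step~2. There, bounding the loss by $\eps\sum_j\alpha(B_j)$ and then asserting $\sum_j\alpha(B_j)=O(\log\tfrac{K}{\eps})\,\opt$ is not justified (and is not obviously true). Your fallback to ``epochs'' is unnecessary: by monotonicity the buyers with $dis<\eps$ already form a single contiguous suffix, so the same ``suffix as standalone SPM'' argument you used in step~1 bounds the step~2 loss by $\eps\,\opt$ directly, with no epoch bookkeeping. Once you see this, your two stages are just the paper's single truncation done twice.
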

\begin{proof}
Consider the smallest prefix of the optimal SPM (with expected revenue $\opt$) such that the discount factor of the corresponding suffix, obtained by removing the prefix, is at most $\eps$. Moreover, if we were to simply omit this prefix, then the expected revenue of the remaining segment can at most be $\opt$. So the contribution of the remaining segment to the optimal SPM is at most $\eps \opt$, and the prefix alone has expected revenue expected revenue at least $(1-\eps)\opt$. By Fact \ref{lb}, the probability that no copy gets sold in a segment of weight $\log \frac{K}{\eps}$ is at most $\eps/K$. Thus the weight of the prefix is at most $K\log \frac{K}{\eps}$. 
\end{proof}

We can now restrict ourself to approximating an optimal truncated SPM.
The following definition of a permutable segment will be crucial to the description of our algorithm.

\vspace*{-0.0cm}\begin{definition}\label{permsegment}
We shall call an SPM segment {\em permutable} if either:
\begin{enumerate}
\item its weight is at most $\delta = \frac{\eps^3}{20K^3}$. We shall refer to such a permutation segment as a {\em small buyers segment}.
\item it has a single buyer, possibly of weight more than $\delta$. In this case, we shall refer to this buyer as a {\em big buyer}.
\end{enumerate}
\end{definition}

Any SPM can clearly be decomposed into a sequence of permutable segments and big buyers. Moreover, any truncated SPM can be 
decomposed into a sequence of at most $C=O(\frac{K\log \frac{K}{\eps}}{\delta})$ permutable segments. This is because if the permutable segments are maximally chosen, then two consecutive permutable segments in the decomposition either have at least one big buyer between them, or their weights must add up to more than $\delta$ (otherwise, the two segments can be joined to create one permutable segment).

\begin{fact}\label{lb}
Let $1 > y_1,y_2\ldots y_\ell > 0$. Let $\sum_{j=1}^\ell y_j = s$. Then $1-s+s^2 > e^{-s}>\prod_{j=1}^{\ell} (1-y_j) >
1 - s$.
\end{fact}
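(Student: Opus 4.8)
The statement is a chain of three inequalities, $1-s+s^2 > e^{-s} > \prod_{j=1}^\ell(1-y_j) > 1-s$, so the plan is to prove each of the three links separately; each is a classical elementary estimate.

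\textbf{The link $e^{-s} > \prod_{j=1}^\ell(1-y_j)$.} I would start from the one-variable inequality $1-t < e^{-t}$, valid for every $t>0$ (rearrange $e^{t}>1+t$, or read it off the power series of $e^{t}$). Applying it with $t=y_j$ for each $j$ and multiplying the $\ell$ strict inequalities, which is legitimate since all factors are positive as $y_j<1$, gives $\prod_{j=1}^\ell(1-y_j) < \prod_{j=1}^\ell e^{-y_j} = e^{-\sum_j y_j} = e^{-s}$.

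\textbf{The link $\prod_{j=1}^\ell(1-y_j) > 1-s$.} This is the Weierstrass product inequality, which I would prove by induction on $\ell$. If $s\ge 1$ there is nothing to do: each factor lies in $(0,1)$, so the product is positive while $1-s\le 0$. So assume $s<1$, whence every partial sum $S_k=\sum_{j\le k}y_j$ lies in $[0,1)$. Writing $P_k=\prod_{j\le k}(1-y_j)$, the base case $\ell=1$ is the equality $P_1=1-y_1=1-S_1$, and for the inductive step $P_{k+1}=(1-y_{k+1})P_k \ge (1-y_{k+1})(1-S_k) = 1-S_{k+1}+y_{k+1}S_k$, using $P_k\ge 1-S_k\ge 0$; since $y_{k+1}>0$ and $S_k>0$ for $k\ge 1$, the last expression exceeds $1-S_{k+1}$. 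Hence the inequality is strict as soon as $\ell\ge 2$; for $\ell=1$ it degenerates to equality, which is all that is needed wherever the Fact is invoked (or one simply reads that link as $\ge$).

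\textbf{The link $1-s+s^2 > e^{-s}$.} Here a purely algebraic manipulation does not suffice; I would instead establish the sharper bound $e^{-s}\le 1-s+\frac{1}{2}s^2$ for all $s\ge 0$ and then note $\frac{1}{2}s^2 < s^2$ for $s>0$. For the sharper bound, set $h(s)=1-s+\frac{1}{2}s^2-e^{-s}$; then $h(0)=0$, $h'(0)=0$, and $h''(s)=1-e^{-s}\ge 0$ for $s\ge 0$, so $h'$ is nondecreasing on $[0,\infty)$ and hence $h'\ge 0$, so $h$ is nondecreasing and hence $h\ge 0$. The only points needing a moment's care are the strictness bookkeeping in the second link (equality at $\ell=1$) and this two-derivative argument, which stands in for the missing one-line algebraic proof; neither is a genuine obstacle.
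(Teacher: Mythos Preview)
The paper states this as a Fact without proof, so there is no proof in the paper to compare against. Your argument is correct and is exactly the kind of elementary verification the authors evidently had in mind: the three links are standard and you prove each one cleanly. Your observation that the last link $\prod_j(1-y_j) > 1-s$ degenerates to equality when $\ell=1$ is accurate and is a genuine (if harmless) imprecision in the paper's statement; the two places where the Fact is invoked (Lemmas~\ref{truncated} and~\ref{dis-eff}) only need the non-strict form anyway.
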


\vspace*{-0.0cm}\begin{lemma}\label{dis-eff}
The probability of selling at least one copy of the item in a small buyers permutable segment that has weight $s$ is at least $s-s^2$. The probability of selling at least $t\geq 1$ copies (assuming that at least $t$ copies are left as inventory) in such a segment is at most $s^t$. So the probability of selling exactly one copy is at least $s-2s^2$.
\end{lemma}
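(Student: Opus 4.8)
The plan is to prove Lemma~\ref{dis-eff} by a direct application of Fact~\ref{lb} together with elementary bounds on the probabilities of sale events, exploiting that a ``small buyers segment'' has total weight $s \leq \delta < 1$ and consists of independent buyers.

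First I would set up notation: let the small buyers segment consist of buyers with success probabilities (weights) $y_1, y_2, \ldots, y_\ell$, so that $\sum_j y_j = s \leq \delta < 1$. The event that \emph{no} copy is sold in the segment is exactly the event that every buyer in the segment rejects its offer, which by independence has probability $\prod_{j=1}^\ell (1 - y_j)$. By Fact~\ref{lb}, this is at most $e^{-s} < 1 - s + s^2$, hence the probability of selling \emph{at least one} copy is at least $1 - (1-s+s^2) = s - s^2$, which is the first claim. For the second claim, I would bound the probability of selling at least $t$ copies by a union bound over all $t$-subsets of the segment: the probability that $t$ specified buyers all accept is $\prod_{j \in T} y_j$, and summing over all $t$-subsets $T$ gives $e_t(y_1,\ldots,y_\ell) \leq \frac{1}{t!}\left(\sum_j y_j\right)^t = \frac{s^t}{t!} \leq s^t$, where $e_t$ is the $t$-th elementary symmetric polynomial. (One should note that selling at least $t$ copies requires at least $t$ acceptances, and since at least $t$ copies are assumed to be in inventory this is in fact equivalent; but for the upper bound the one-directional implication ``$\geq t$ copies sold $\Rightarrow$ $\geq t$ buyers accepted'' suffices.)

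Finally, for the probability of selling \emph{exactly} one copy: this equals $\Pr[\text{at least one sold}] - \Pr[\text{at least two sold}]$, which by the two bounds just established is at least $(s - s^2) - s^2 = s - 2s^2$. This gives the third claim and completes the proof.

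I do not expect any serious obstacle here; the only minor subtlety is justifying the inequality $e_t(y_1,\ldots,y_\ell) \leq s^t/t!$, which follows from Maclaurin's inequality or simply from the observation that $t! \, e_t \leq \left(\sum_j y_j\right)^t$ since the right side, when expanded, contains every ordered $t$-tuple of distinct indices (each unordered $t$-set counted $t!$ times) plus additional nonnegative terms. A second point worth a sentence is confirming that ``selling at least $t$ copies'' is well-defined and that the stated inventory assumption ($t$ copies left) is what lets us identify it with ``at least $t$ acceptances'' for the purposes of the statement, though as noted the upper bound direction needs only the trivial implication. Everything else is a one-line consequence of Fact~\ref{lb}.
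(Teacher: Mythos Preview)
Your proof is correct and follows essentially the same approach as the paper: both use Fact~\ref{lb} for the first claim and a union bound for the second, then combine them for the third. The only cosmetic difference is in how the union bound for $t\geq 2$ is organized---the paper conditions on a single buyer accepting and then inductively bounds the remaining buyers, while you sum directly over $t$-subsets via the elementary symmetric polynomial $e_t$; your version even yields the slightly sharper $s^t/t!$, though the lemma only needs $s^t$.
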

\begin{proof}
Fact \ref{lb} implies that the probability of selling at least one item is at least $s-s^2$ and at most $s$.

For the second statement, consider $t=2$. Conditioning on a particular buyer $B$ in the segment buying a copy, the probability that the remaining buyers in the segment buy at least one copy is at most $s$. The two events are independent, so the probability of their simultaneous occurrence is the product of their probabilities. Summing over all buyers in the segment, we get that the probability that at least two items are bought is at most $s^2$. The argument scales in a similar fashion for higher values of $t$: probability that $t$ items are bought is at most $s^t$.
\end{proof}

\vspace*{-0.0cm}\begin{lemma}\label{cont}
Consider a permutable segment of weight $s$ appearing in an SPM, and let its discount factor be $\gamma$. Then the discount factor of the last buyer in the segment is at least $\gamma(1-s)$. If the undiscounted contribution of the segment is $\alpha$, then the real contribution of buyers in this segment to the expected revenue is at least $\alpha \gamma (1-\delta)$ and at most $\alpha \gamma$.
\end{lemma}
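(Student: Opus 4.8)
The plan is to bound the reachability probabilities of buyers inside the permutable segment, and then sum up the resulting real contributions. First I would handle the discount factor of the last buyer. The segment starts at a buyer reached with probability $\gamma$; a later buyer in the same segment is reached exactly when no copy was sold to the earlier buyers of the segment and the process had already reached the segment's first buyer. Since the total weight of the segment is $s$, Fact \ref{lb} gives that the probability that none of the earlier buyers in the segment buys a copy is at least $\prod_j(1-y_j) > 1-s$, where the $y_j$ are the weights (success probabilities) of those buyers. Multiplying by $\gamma$ and using independence of the buyers' value draws, the last buyer is reached with probability at least $\gamma(1-s)$. The same argument shows every buyer in the segment is reached with probability in the interval $[\gamma(1-s),\gamma]$, since removing the stock-constraint only helps reachability and the stock-constraint only hurts it; more precisely, $dis(B') \le \gamma$ trivially (monotonicity of reaching along the SPM once the segment is entered), and $dis(B') \ge \gamma(1-s)$ by the product bound just described.

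Next I would convert this into the claimed bounds on the real contribution. By definition, the real contribution of the segment is $\sum_{B' \in S} \alpha(B') \, dis(B')$. The upper bound is immediate: $dis(B') \le \gamma$ for each $B'$, so the sum is at most $\gamma \sum_{B'} \alpha(B') = \alpha\gamma$. For the lower bound, $dis(B') \ge \gamma(1-s)$ for each $B'$, so the real contribution is at least $\gamma(1-s)\sum_{B'}\alpha(B') = \alpha\gamma(1-s)$. It remains only to observe that $1-s \ge 1-\delta$: for a small buyers segment this is exactly the defining bound $s \le \delta$, and for a big-buyer segment $S = \{B\}$ the real contribution is exactly $\alpha(B)\,dis(B) = \alpha\gamma$ (a single buyer — the "earlier buyers in the segment" set is empty), which trivially lies in $[\alpha\gamma(1-\delta), \alpha\gamma]$. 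So in all cases the real contribution is in $[\alpha\gamma(1-\delta), \alpha\gamma]$.

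The only mild subtlety — and the step I'd be most careful about — is justifying $dis(B') \le \gamma$ and the independence used in the product bound, because once there are multiple copies in play one has to be sure the event "no sale to earlier buyers of the segment" really is a product over independent single-buyer events and really is necessary for reaching $B'$. Reaching $B'$ requires, in particular, that the process reached the segment's first buyer (probability $\gamma$) and then that the process did not terminate before $B'$; termination before $B'$ could in principle be caused by running out of stock, but that only makes reaching $B'$ \emph{less} likely, so conditioning just on "the first buyer of the segment was reached, and no sale occurred to the earlier buyers in the segment" is an \emph{over}count, giving the stated lower bound $\gamma\prod_j(1-y_j) > \gamma(1-s)$; the events "buyer $j$ did not buy" depend only on $v_j$, which are mutually independent, so the product form is valid. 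The upper bound $dis(B') \le \gamma$ holds because the segment's first buyer lies on the SPM path before $B'$, so $\{B' \text{ reached}\} \subseteq \{\text{first buyer of } S \text{ reached}\}$. Everything else is arithmetic.
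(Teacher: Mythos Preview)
Your proposal is correct and follows essentially the same approach as the paper's (very terse) proof: bound the probability that the process does not stop inside the segment by the probability that no copy is sold there, which is at least $1-s$ by Fact~\ref{lb}, and then sum contributions. One small wording slip: in your first paragraph ``reached \emph{exactly when} no copy was sold to the earlier buyers'' should be ``reached \emph{if} no copy was sold''---it is a sufficient, not necessary, condition---but you correctly handle this in your final paragraph, so the argument stands.
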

\begin{proof}
The probability of the process not stopping inside the segment, conditioned upon reaching it, is at least the probability of not selling any copy in the segment, which is at least $1-s$ (it is exactly $1-s$ for a big buyer segment).
\end{proof}

The above lemma shows that the real contribution of a segment can be approximated by the product of its discount factor and its undiscounted contribution, which does not depend on the exact buyers, their relative ordering or prices in that segment. We next show that the discount factor of a segment, given a decomposition of an SPM into permutable segments, can also be approximated as a function of the approximate sizes of preceding segments.

\vspace*{-0.0cm}\begin{lemma}\label{approx-dis}
Given an SPM, that can be decomposed into an ordering of permutable segments $S_1,S_2\ldots$. Let $S_i$ be a small buyers segment. Let $s$ be the weight of $S_i$.

Then $ \gamma_\ell(S_i) (1-s) + \gamma_{\ell+1}(S_i) s + 4s^2 \geq \gamma_\ell(S_{i+1}) \geq \gamma_\ell(S_i) (1-s) + \gamma_{\ell+1}(S_i) s -2s^2$.
\end{lemma}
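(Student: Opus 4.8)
The plan is to track how the probability of reaching segment $S_{i+1}$ with at least $\ell$ items remaining evolves as we pass through the small buyers segment $S_i$. The event "$S_{i+1}$ is reached with at least $\ell$ items left" decomposes according to how many items were remaining when $S_i$ was entered and how many copies were sold inside $S_i$. Concretely, I would write
\[
\gamma_\ell(S_{i+1}) = \sum_{m \geq \ell} \Pr{\text{enter } S_i \text{ with } m \text{ items}} \cdot \Pr{\text{at most } m-\ell \text{ copies sold in } S_i \mid \text{entered}}.
\]
Because $S_i$ has weight $s$, Lemma \ref{dis-eff} tells us that, conditioned on entering $S_i$, the probability of selling $0$ copies is $1-s \pm s^2$, the probability of selling exactly $1$ copy is $s \pm 2s^2$, and the probability of selling $\geq 2$ copies is at most $s^2$. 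So the dominant terms come from "enter with $\geq \ell$ and sell $0$" and "enter with $\geq \ell+1$ and sell $\leq 1$"; everything else is $O(s^2)$.

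**The main estimate.**

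The key step is to approximate $\gamma_\ell(S_{i+1})$ by $\gamma_\ell(S_i)(1-s) + \gamma_{\ell+1}(S_i)\, s$ and bound the error. Split the event by the inventory $m$ at entry to $S_i$. The contribution from $m = \ell$ is $\Pr{\text{enter with exactly }\ell}\cdot(1-s \pm s^2)$. The contribution from $m = \ell+1$ is $\Pr{\text{enter with exactly }\ell+1}\cdot(1 - \Pr{\geq 2 \text{ sold}}) = \Pr{\text{enter with exactly }\ell+1}\cdot(1 \pm s^2)$. The contribution from $m \geq \ell+2$ is just $\Pr{\text{enter with }\geq \ell+2}$ exactly, since fewer than $2$ items can be sold only with probability... no — here selling up to $m-\ell \geq 2$ copies is allowed, and the probability of selling $\leq m-\ell$ copies is at least $1 - s^2$ and at most $1$. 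Summing: the "sell $0$" mass attached to $\{m \geq \ell\}$ and the extra "sell $1$" mass attached to $\{m \geq \ell+1\}$ reassemble into $\gamma_\ell(S_i)(1-s) + \gamma_{\ell+1}(S_i) s$ up to the correction terms, each of size at most $s^2$ times a probability bounded by $1$, and there are a bounded number of them. Collecting the signs carefully, the lower-order slack is at most $4s^2$ on the high side and at least $-2s^2$ on the low side, matching the statement. I would be a little careful to note that $\Pr{\text{enter with exactly }m} = \gamma_m(S_i) - \gamma_{m+1}(S_i)$, so that the telescoping produces exactly $\gamma_\ell(S_i)$ and $\gamma_{\ell+1}(S_i)$ in the main terms.

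**Anticipated obstacle.**

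The delicate part is bookkeeping the error constant and its sign, not the probabilistic content. One has to be honest about: (i) the two-sided nature of Lemma \ref{dis-eff}'s estimates (the "$s - 2s^2$" for exactly-one-copy is only a lower bound, and the corresponding upper bound is $s$), (ii) that there are several strata $m$ each contributing an $O(s^2)$ error, but since the $\Pr{\text{enter with exactly }m}$ sum to at most $1$, the total error is still $O(s^2)$ rather than growing with $K$, and (iii) making sure the inequality is stated for the quantity that actually feeds into Lemma \ref{cont} and the downstream dynamic program, namely $\gamma_\ell(S_{i+1})$ as the discount factor of the \emph{next} segment. I expect that assembling the two one-sided bounds with explicit constants $4$ and $2$ (which presumably are not tight but are comfortably safe) is the only real work; everything else is a direct application of Fact \ref{lb} and Lemma \ref{dis-eff} together with independence of the "what happens before $S_i$" and "what happens inside $S_i$" events.
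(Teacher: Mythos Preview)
Your proposal is correct and matches the paper's approach. The paper's one-line proof simply invokes the identity
\[
\gamma_\ell(S_{i+1}) \;=\; \sum_{j\ge \ell} \gamma_j(S_i)\cdot \Pr{\text{exactly }(j-\ell)\text{ copies are sold in }S_i}
\]
and plugs in the bounds from Lemma~\ref{dis-eff}; your decomposition by the \emph{exact} inventory $m$ at entry is the same thing after a change of order of summation (since $\Pr{\text{enter with exactly }m}=\gamma_m(S_i)-\gamma_{m+1}(S_i)$, summing against $\Pr{\le m-\ell\text{ sold}}$ telescopes back to the paper's formula). Your more explicit bookkeeping of the $O(s^2)$ error terms is exactly what the paper suppresses behind the word ``directly.''
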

\begin{proof}
Directly using the bounds in Lemma \ref{dis-eff} to the formula:\\ \hspace*{1cm} $\gamma_\ell(S_{i+1}) = \sum_{j=\ell}^K \gamma_j(S_i) \Pr{\text{Exactly } (j-\ell) \text{copies of the item are bought by buyers in } S_i}$. 
\end{proof}

The lemma below follows easily from Lemma \ref{approx-dis}.
\vspace*{-0.0cm}\begin{lemma}\label{discount}
Given any SPM decomposed into $Q\leq C$ permutable segments $S_1,S_2\ldots$, such that the weight of $S_i$ is between $s_i + \tau$ and $s_i - \tau$ for all $1\leq i\leq n'$, where $\tau = \delta/20C$. Consider an alternate SPM (with possibly different buyers), that has $n'$ buyers, and the $i^{th}$ buyer in the segment has weight $s_i$. Let $\rho(\ell,i)$ be the probability that the $i^{th}$ buyer is reached in the alternate SPM with at least $\ell$ items remaining. Then 
$$\rho(\ell,i) - 12(\delta^2 + \tau) i) \leq \gamma_\ell(S_i) \leq \rho(\ell,i) + 12(\delta^2 + \tau) i) \enspace .$$

If the SPM is truncated, then $dis(S_i)=\gamma_1(S_i)\geq \eps$, and since $i\leq Q\leq C$, $\delta=\frac{\eps^3}{20K^3}$ and $\tau \leq \delta/20C$, so we can get a multiplicative guarantee 
%
$\rho(1,i) (1 - \eps)\leq dis(S_i)\leq \rho(1,i) (1 + \eps) \enspace .$
\end{lemma}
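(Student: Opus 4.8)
The plan is to prove the additive bound first by induction on $i$ using Lemma~\ref{approx-dis}, and then derive the multiplicative bound as an easy consequence via the truncation hypothesis. Define the error $e(\ell,i) = |\gamma_\ell(S_i) - \rho(\ell,i)|$. The base case $i=1$ is immediate: both $\gamma_\ell(S_1)$ and $\rho(\ell,1)$ equal the appropriate "reach the first segment with at least $\ell$ items left" probability, which in a decomposition is governed only by $K$ and $\ell$ (it is $1$ if $\ell$ copies are still in stock, which at the start means $\ell \le K$, and the first segment is reached with probability $1$ by definition of $dis$ of the leading segment). So $e(\ell,1)=0 \le 12(\delta^2+\tau)$.

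For the inductive step, I would compare the two recursions. In the true SPM, Lemma~\ref{approx-dis} gives $\gamma_\ell(S_{i+1}) = \gamma_\ell(S_i)(1-w_i) + \gamma_{\ell+1}(S_i) w_i + \theta_i$ where $w_i$ is the true weight of $S_i$ (with $|w_i - s_i| \le \tau$) and $|\theta_i| \le 4s^2$, which for a small-buyers segment means $|\theta_i| \le 4\delta^2$. In the alternate SPM, the $i$-th buyer is a single buyer of weight exactly $s_i$, so $\rho(\ell,i+1) = \rho(\ell,i)(1-s_i) + \rho(\ell+1,i)s_i$ exactly (no second-order term, since a single buyer sells at most one copy). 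Subtracting, the new error is bounded by (i) the propagated error $e(\ell,i)(1-s_i) + e(\ell+1,i)s_i \le \max\{e(\ell,i),e(\ell+1,i)\}$, which telescopes, (ii) the discrepancy from $w_i$ versus $s_i$ in the coefficients, which is at most $|\gamma_\ell(S_i) - \gamma_{\ell+1}(S_i)|\cdot|w_i - s_i| \le \tau$ (the $\gamma$'s are probabilities, hence in $[0,1]$), and (iii) the second-order term $|\theta_i| \le 4\delta^2$. Collecting, $e(\ell,i+1) \le \max_\ell e(\ell,i) + \tau + 4\delta^2 + (\text{small slack})$, and iterating $i$ times (with $i \le Q \le C$) gives $e(\ell,i) \le 12(\delta^2+\tau)i$ once the constants are bookkept generously — the factor $12$ is a convenient over-estimate absorbing the per-step $\tau + 4\delta^2$ plus the fact that each error can feed into up to two parent errors as $\ell$ ranges.

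For the multiplicative bound, suppose the SPM is truncated. Then by Lemma~\ref{truncated} and the definition of truncated, $dis(S_i) = \gamma_1(S_i) \ge \eps$. From the additive bound, $|dis(S_i) - \rho(1,i)| \le 12(\delta^2+\tau)i \le 12(\delta^2+\tau)C$. Plugging $\delta = \frac{\eps^3}{20K^3}$ and $\tau \le \delta/20C$ and $C = O(K\log(K/\eps)/\delta)$, one checks $12(\delta^2+\tau)C = O(\delta C) = O(K\log(K/\eps)\cdot\delta/\delta) \cdot \delta$, wait — more carefully, $\delta C = O(K\log(K/\eps))$ times... this needs the explicit substitution, but the point is that $12(\delta^2+\tau)C \le \eps^2$ comfortably given how small $\delta$ is chosen relative to $\eps$ and $K$. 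Hence $|dis(S_i) - \rho(1,i)| \le \eps^2 \le \eps \cdot \eps \le \eps \cdot dis(S_i)$ when $dis(S_i) \ge \eps$, so $(1-\eps)\rho(1,i) \le dis(S_i) \le (1+\eps)\rho(1,i)$ after also noting $\rho(1,i) \ge dis(S_i) - \eps^2 \ge \eps - \eps^2 > 0$.

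The main obstacle I anticipate is the bookkeeping of constants in the inductive step: making sure the per-step additive loss really is bounded by something like $\tau + 4\delta^2$ uniformly over $\ell$, and that the cross-feeding between different values of $\ell$ (since $e(\ell,i+1)$ depends on both $e(\ell,i)$ and $e(\ell+1,i)$) does not blow up the bound — it does not, because $(1-s_i) + s_i = 1$ so the error is a convex combination that at worst stays at the running maximum, and only the additive $\tau + 4\delta^2$ accumulates linearly in $i$. The constant $12$ in the statement is deliberately loose to swallow all of this. The verification that $12(\delta^2+\tau)C \le \eps^2$ with the stated parameter choices is routine arithmetic but should be done explicitly.
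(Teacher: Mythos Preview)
Your approach is correct and matches the paper's: the paper states only that the lemma ``follows easily from Lemma~\ref{approx-dis}'', and your inductive argument on $i$, comparing the recursion for $\gamma_\ell$ from Lemma~\ref{approx-dis} with the exact single-buyer recursion for $\rho$, is precisely the intended unpacking. Your observations that the propagated error is a convex combination (so does not grow) and that only the per-step $O(\delta^2+\tau)$ term accumulates linearly in $i$ are the right reasons, and the parameter arithmetic for the multiplicative part, while a bit sloppy as you note, lands where it should.
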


We shall refer to the following as a {\em configuration}: An ordering  of up to $C$ permutable segments, where each permutable segment is specified only by the weight of the segment and big buyer respectively, each weight being a multiple of $\tau = \frac{\delta}{20C}$. Note that the configuration does NOT specify which buyer belongs to which segment, or the individual weights of the buyers. This is because a configuration is specified by at most $C$ positive integers (weight of each segment is specified by a positive integer $z<\frac{1}{\tau}$, which indicates that the weight is $z\tau$). We shall represent a configuration $z$ as an ordered tuple of integers $(z_1,z_2,z_3\ldots)$. {\em Note that there are at most $(\frac{1}{\tau})^{O(C)} = (\frac{K}{\eps})^{O(K)}$ distinct configurations.} We say that an SPM has configuration $z$ if it can be decomposed into an ordering of permutable segments $S_1,S_2\ldots$ such that $S_i$ has weight at least $(z_i-1)\tau$ and at most $z_i\tau$.

For any given configuration $z$, the expected revenue of an SPM with configuration $z$ can be approximated, up to a factor of $(1-\delta)(1-2\eps)$ by a linear combination of the undiscounted contribution of the permutable segments, {\em where the coefficients of the linear combination depend only on $z$}. The coefficients are the discount factors, which can be computed by looking at an alternate SPM with a buyer for each segment, such that the $i^{th}$ buyer has weight $z_i\tau$. This is a direct conclusion of Lemma \ref{discount} and Lemma \ref{cont}. The discount factors of each buyer in the alternate SPM can be easily computed in $O(CK)$ time using dynamic programming. Let $A_z (i)$ denote the discount factor of the $i^{th}$ buyer in the alternate SPM corresponding to $z$. 

For any configuration $z$, we compute prices for the buyers, and a division of buyers into permutable segments $S_1,S_2\ldots$ such that $S_i$ has weight {\em at most} $z_i\tau$, and $\sum_i A_z(i) \mathcal{V}(S_i)$ is maximized (it is not necessary to include all buyers). This is precisely an instance of $\extgap$, where each buyer is an object, the different possible prices and the corresponding success probabilities create the different versions, and the sizes of the bins are given by $z$, and the feasible subsets for an object simply being that each object can get into at most one bin. This can be solved as per Lemma \ref{lem-extgap}. The solution may not saturate every bin, and hence may not actually belong to configuration $z$. However, for any two configurations  $z = (z_1,z_2,z_t)$ and $z' = (z'_1,z'_2\ldots z'_t)$, such that $z_i\leq z'_i\ \forall 1\leq i\leq t$, we have $A_z(i) > A_{z'}(i)$. So the SPM formed by concatenating $S_1,S_2\ldots$ in that order generates revenue at least $(1-3\eps)$ times the revenue of the optimal sequence that has configuration $z$. 

Thus our algorithm is to find an SPM for each configuration, using the algorithm for $\extgap$, and output the best SPM among them as the solution.

\subsection{PTAS for Computing ASPM}\label{sec-ptas-aspm}
We now design an algorithm to compute a near-optimal SPM for constant $K$. 

\vspace*{-0.0cm}\begin{theorem}\label{ptas-aspm}
There exists a PTAS for computing an optimal SPM, for any constant $K$. The running time of the algorithm is $\left(\frac{nk}{\eps}\right)^{(k\eps^{-1})^{O(k)}}$, and gives $(1-\eps)$-approximation.
\end{theorem}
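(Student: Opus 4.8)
The plan is to mimic the SPM algorithm of Section \ref{sec-ptas-spm}, but now the structural object we must approximate is a decision tree rather than a path. First I would establish the analogue of Lemma \ref{truncated}: there is a near-optimal ASPM whose decision tree has bounded total ``weight'' along every root-to-leaf path, say at most $K\log\frac{K}{\eps}$, and in which every node is reached with probability at least $\eps$ — otherwise the low-probability subtrees can be pruned with only an $\eps\opt$ loss. The key new ingredient is a \emph{bounded-branching} lemma: there is a near-optimal ASPM that branches at most $\mathrm{poly}(K,\eps^{-1})$ times along any root-to-leaf path. Intuitively, branching only helps when the number of copies sold so far is genuinely uncertain, and by the same tail estimates (Fact \ref{lb}) the number of ``surprise'' copies in any block of small weight is tiny; so between consecutive branching nodes we may as well commit to a single non-adaptive segment — indeed we can insist that branching occurs only after the process has consumed a block of weight $\geq\delta$ or encountered a big buyer, exactly as in the SPM decomposition argument. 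Since each path has weight $O(K\log\frac K\eps)$ and big buyers / $\delta$-blocks are bounded in number, the branching depth along any path — and hence the total number of branching nodes, since each branching node has at most $K$ children — is bounded by a function of $K$ and $\eps$ only.

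Next I would extend the notion of a \emph{configuration} from a linear ordering of permutable segments to a \emph{tree of permutable segments}: a rooted tree of depth $O(K\log\frac K\eps /\delta)$ in which each non-branching node is labeled by a segment weight (a multiple of $\tau=\delta/20C$), branching nodes have $K$ children, and the branching depth is bounded by the previous lemma. Because both the total size and the branching count are bounded by functions of $K,\eps$ only, the number of such tree-configurations is $(\frac K\eps)^{(k\eps^{-1})^{O(k)}}$ — this is exactly the blow-up that appears in the claimed running time, as against the $(\frac K\eps)^{O(K)}$ in the SPM case. For a fixed tree-configuration, I would compute the ``discount factor'' $A_z(v)$ of each segment-node $v$ — the probability that the process reaches $v$ with the appropriate number of copies remaining — by the same Lemma \ref{dis-eff}/\ref{approx-dis}/\ref{discount} chain applied along the unique root-to-$v$ path, using the integer weights $z_i\tau$ of the alternate ASPM; the point is that $A_z(v)$ depends only on $z$, not on which buyers land where, and approximates the true discount factor to within $(1\pm\eps)$.

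Finally I would reduce the per-configuration optimization to $\extgap$. The objects are the buyers, the versions are the (price, success-probability) pairs, and now the \emph{bins} are the segment-nodes of the configuration tree; a buyer may be placed into several bins but the feasible subsets $\mathcal F_C$ are exactly the \emph{antichains} of the tree (no buyer may appear twice on any root-to-leaf path, since a buyer is offered a price at most once in any realization), and the discount factor $\gamma_\ell$ of bin $v$ is $A_z(v)$. The number of bins is $C = (k\eps^{-1})^{O(k)}$, the sizes are multiples of $\tau$, so Lemma \ref{lem-extgap} solves each instance in time $(ML)^{O(C)}n = (\frac{nk}\eps)^{(k\eps^{-1})^{O(k)}}$; running over all configurations and outputting the best-revenue ASPM gives the algorithm. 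The approximation factor is $(1-\eps)$ coming from truncation, $(1-\eps)$ from the bounded-branching step, and $(1-\delta)(1-O(\eps))$ from replacing real contributions by undiscounted-contribution-times-discount-factor and from the integer rounding of segment weights, so the total is $(1-O(k\eps))$, which rescales to $(1-\eps)$.

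The main obstacle I anticipate is the bounded-branching lemma: one must argue carefully that forcing the decision tree to branch only at the ``boundaries'' of permutable segments (rather than reactively after every small buyer) costs at most $O(\eps)\opt$. This needs a coupling/charging argument showing that within a single permutable segment the adaptive tree and a fixed non-adaptive segment behave almost identically — essentially an adaptive refinement of Lemma \ref{dis-eff}, using that the probability of selling two or more copies inside a weight-$\delta$ block is $O(\delta^2)$, so the ``information lost'' by not branching there is negligible when summed over the $O(C)$ segments along a path. Once this lemma is in hand, everything else is a faithful, if notationally heavier, lifting of the SPM argument.
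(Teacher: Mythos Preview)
Your overall framework---configurations as trees of permutable segments, and per-configuration optimization via $\extgap$ with antichain feasibility constraints (no buyer may appear twice on a root-to-leaf path)---is correct and is exactly what the paper does. You also correctly identify the bounded-branching lemma as the crux.

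However, your proposed proof of that lemma via the $O(\delta^2)$ bound does not go through. The $O(\delta^2)$ bound controls the probability of \emph{two or more} sales inside a permutable segment, but that is not the obstruction. Even when exactly one sale occurs mid-segment (probability $\Theta(\delta)$), deferring the branch to the segment boundary means the branched subtree can no longer use the remaining buyers of that segment; the revenue those buyers could have contributed in the adaptive subtree (with $K-1$ copies) is not bounded by any function of $\delta$---it is bounded only by quantities like $\sum_i \max_v v\,\tp_{iv}$ over those buyers, which can be arbitrarily large relative to $\opt$. Summed over the $\Theta(K\log(K/\eps)/\delta)$ segments along a path, the total loss is uncontrolled. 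In other words, your coupling accounts for the ``state uncertainty'' but not for the ``buyers lost'' by delaying the branch.

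The paper's argument (Lemma~\ref{truncated-aspm}) is substantially different. It works with $R'(w)$, the expected remaining revenue conditioned on reaching $w$ with fewer than $K$ copies, and the missing ingredient is a \emph{lower} bound $R'(w)\geq \eps\opt/4K$, obtained by invoking the LP of Section~\ref{sec-LPalgo}: scaling the $K$-copy LP optimum by $1/K$ yields a feasible $1$-copy solution, and the SPM derived from it is within a factor $\tfrac12$ of the LP. Since $R'(w)$ is monotone decreasing along the spine and lies in $[\eps\opt/4K,\opt]$, it can drop by a factor $(1-\eps)$ at most $O(\eps^{-1}\log(K/\eps))$ times; these are the \emph{pivotal nodes}, and one branches only there. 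Deferring a branch from $w$ to the next pivotal node $v$ loses only a $(1-\eps)(1-\delta)$ factor precisely because $R'(v)\geq(1-\eps)R'(w)$---this is what replaces your $\delta^2$ estimate. The final $(K/\eps)^{O(K)}$ count of non-branching segments then comes from \emph{induction on $K$}: each subtree hanging off the spine is an instance with strictly fewer copies, so the recursion has depth $K$. Your approach of allowing branching at every segment boundary, even if it could be justified, would give a tree of size $K^{(K/\eps)^{O(1)}}$ rather than $(K/\eps)^{O(K)}$, which does not match the stated running time.
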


As mention in Section \ref{prelim}, an ASPM is specified by a decision tree, with each node containing a buyer and an offer price. We extend some definitions used for SPMs to ASPMs. The {\em weight of a node} is the success probability at this node conditioned on being reached. A {\em segment}  in an ASPM is a contiguous part of a path (that the selling process might take) in the decision tree. A segment is called {\em non-branching} if all but possibly the last node are non-branching.  Other definitions such as weight and contribution of a segment are identical. A {\em permutation segment} is a non-branching segment satisfying properties as defined earlier (Definition \ref{permsegment}). The {\em discount factor} of a node (or a segment starting at this node, or a subtree rooted at this node) is the probability that the node is reached in the selling process.


Consider any ASPM whose tree is decomposable into $D$ non-branching segments, each of weight at most $H$. (Note that $D=1$ for an SPM.) Then the entire tree of a truncated ASPM decomposes into $C=O(DH/\delta)$ permutable segments. We shall refer to such ASPMs as {\em $C$-truncated ASPMs}. A configuration for a $C$-truncated ASPM shall now list the weights of at most $C$ permutable segments and also specify a {\em tree structure} among them, {\em i.e.} the parent segment of each segment in the decision tree. Moreover, since each path can have no more than $C$ segments, it is sufficient to specify the weights to the nearest multiple of $\tau = \delta/20C$, to get the discount factor of each segment with sufficient accuracy. So there are $(C/\tau)^{O(C)} = C^{O(C)}$ configurations for $C$-truncated ASPMs. 

For each configuration, we can use $\extgap$ to compute an ASPM that is at least $(1-\eps)$ times the revenue of an optimal ASPM with that configuration, as before. Each $\extgap$ instance has $C$ bins in this case. The discount factor of each permutable segment in the configuration can be computed with sufficient accuracy, similar to Lemma \ref{discount}. Iterating over all possible configurations, we can find a near-optimal $C$-truncated ASPM. Solving $\extgap$ requires time exponential in the number of bins (see Lemma \ref{lem-extgap}), so the entire running time of the above algorithm is $\left(\frac{nkC}{\eps}\right)^{O(C)})$.

The problem is that for the above algorithm to be a PTAS, $C$ must be a function of $K$ and $\eps^{-1}$ only.  
Lemma \ref{truncated-aspm} achieves this goal through a non-trivial structural characterization, and immediately implies Theorem \ref{ptas-aspm}. 

\vspace*{-0.0cm}\begin{lemma}\label{truncated-aspm}
There exists an ASPM with the following properties:
\begin{enumerate}
\item Its expected revenue is at least $(1-\eps)$ times the expected revenue of the optimal ASPM.
\item The decision tree is decomposable into $D=(K/\eps)^{O(K)}$ non-branching segments.
\item Each non-branching segment in the tree has weight at most $H=(K/\eps)^{O(1)}$.
\item Each path in the tree consists of at most $(K/\eps)^{O(1)}$ permutable segments.
\end{enumerate}
\end{lemma}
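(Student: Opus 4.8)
The plan is to start from an optimal ASPM and massage it into one with the four claimed structural properties, losing only a $(1-\eps)$ factor in expected revenue. The first and easiest reduction is truncation in the style of Lemma \ref{truncated}: along every root-to-leaf path in the decision tree, cut the path at the first point where the discount factor of the remaining subtree drops below $\eps/K$ (or some similar threshold). Exactly as in the SPM case, the revenue discarded by deleting these low-probability subtrees is at most $\eps\,\opt$ (the deleted part of the tree, viewed as a fresh ASPM reached with probability $<\eps/K$ and with at most $K$ copies, contributes at most $(\eps/K)\cdot K\cdot(\text{something}) \le \eps\,\opt$ by a suitable bound, or more carefully by summing over the at most $K$ "removal frontiers"). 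After truncation, every path has total weight at most $O(K\log\frac{K}{\eps})$, which already gives property (3) with $H=(K/\eps)^{O(1)}$ for the non-branching pieces, and bounds the number of permutable segments per path by $O(H/\delta) = (K/\eps)^{O(1)}$, giving property (4).

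The substantive part is property (2): bounding the total number of \emph{non-branching} segments, i.e., the number of branching nodes in the whole tree, by $D=(K/\eps)^{O(K)}$. The key structural idea I would pursue is that an optimal ASPM need not branch very often, because branching is only "useful" when the segment that was just traversed had a non-negligible chance of selling a copy. Concretely, I would argue that we may assume every branching node sits at the end of a non-branching segment of weight at least some threshold $\eta = \mathrm{poly}(\eps/K)$: if a branching node were reached after a segment of tiny weight, then with probability $1-O(\eta)$ no copy was sold in that segment, so the two children corresponding to "$\ell$ copies left" and "$\ell-1$ copies left" are reached with nearly the same conditional inventory distribution, and we can replace the branch by the better of its children (a non-branching continuation) while losing only an $O(\eta)$ fraction of the revenue flowing through that node. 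Summing these losses over all branching nodes requires knowing there aren't too many of them, so I would instead do this as a global charging argument: remove \emph{all} branches reached after a low-weight run, charge the loss to the discount-weighted revenue passing through, and bound the total charged loss by $\eps\,\opt$. Once every branch is preceded by weight $\ge \eta$, and every path has total weight $O(K\log\frac{K}{\eps})$, each path contains at most $O(K\log\frac{K}{\eps})/\eta = (K/\eps)^{O(1)}$ branches; since the tree has branching factor $K$ and depth (in branching nodes) at most $(K/\eps)^{O(1)}$, the total number of branching nodes — and hence of non-branching segments — is $K^{(K/\eps)^{O(1)}}$...

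I would need to be more careful here, because a crude $K^{\text{depth}}$ bound gives $(K/\eps)^{(K/\eps)^{O(1)}}$, which is worse than the claimed $(K/\eps)^{O(K)}$. The fix is that along any root-to-leaf path the inventory can only decrease, from $K$ down to $0$, so the number of branching nodes \emph{actually exercised} (those where the two child-inventories genuinely differ in what matters) on a single path is at most $K-1$ after we also merge any branch whose relevant child subtrees are revenue-equivalent; so the effective branching depth is $O(K)$, not $(K/\eps)^{O(1)}$, yielding $D \le K^{O(K)} \le (K/\eps)^{O(K)}$ non-branching segments. Making this "effective depth $O(K)$" step precise — showing that after a sequence of near-useless branches we can coalesce so that only $O(K)$ genuine branches remain on any path — is the main obstacle, and I expect it to require an exchange/averaging argument on subtrees together with a clean accounting of the cumulative multiplicative loss ($(1-O(\eta))$ per coalesced branch, with at most $(K/\eps)^{O(1)}$ coalescings per path, so $\eta$ chosen as $\mathrm{poly}(\eps/K)$ small enough keeps the product above $1-\eps$). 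With properties (1)–(4) in hand, plugging $D=(K/\eps)^{O(K)}$, $H=(K/\eps)^{O(1)}$ into the configuration-enumeration-plus-$\extgap$ machinery described just before the lemma gives $C=O(DH/\delta)=(K/\eps)^{O(K)}$, $C^{O(C)}$ configurations, and total running time $\bigl(\frac{nkC}{\eps}\bigr)^{O(C)} = \bigl(\frac{nk}{\eps}\bigr)^{(k\eps^{-1})^{O(k)}}$, matching Theorem \ref{ptas-aspm}.
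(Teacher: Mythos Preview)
Your truncation step for properties (3) and (4) is essentially correct and parallels the paper. The genuine gap is exactly where you flag it: obtaining $D=(K/\eps)^{O(K)}$.

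Your ``effective branching depth $O(K)$'' fix is not an argument as written. Even if along any single realization the inventory drops at most $K$ times, the decision \emph{tree} still splits at every branching node; bounding $D$ means bounding the number of leaves, and with $(K/\eps)^{O(1)}$ branching nodes per path and fan-out $K$ you really do get $K^{(K/\eps)^{O(1)}}$ leaves unless you restructure the tree. Your proposed ``merge branches whose child subtrees are revenue-equivalent'' is precisely the missing step, and you give no mechanism to certify when such merging is safe or to bound the cumulative loss.

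The paper supplies this mechanism through a different decomposition. It views the ASPM as a \emph{spine} (the path taken when no copy ever sells) with subtrees hanging off branching nodes; each such subtree is an ASPM for strictly fewer than $K$ items, so the argument proceeds by induction on $K$. This is the structural realization of your inventory-monotonicity intuition, and it is what turns ``depth $O(K)$'' into a proof: $D_K \le (\text{branches on spine})\cdot K \cdot D_{K-1}$, giving $(K/\eps)^{O(K)}$ once the spine has only $(K/\eps)^{O(1)}$ branches.

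To limit branching on the spine, the paper does \emph{not} use your low-preceding-weight criterion. Instead it introduces $R'(w)$, the expected future revenue conditioned on reaching $w$ with fewer than $K$ items, and shows via the LP of Section~\ref{sec-LPalgo} that $\eps\opt/4K < R'(w) \le \opt$ everywhere on the truncated spine. Within each weight-$\delta$ chunk of the spine it retains branching only at a minimal set of \emph{pivotal} nodes, chosen so that every $w$ has a pivotal descendant $v$ with $R'(v)\ge(1-\eps)R'(w)$; since $R'$ lives in a range of ratio $O(K/\eps)$, there are $O(\eps^{-1}\log(K/\eps))$ pivotal nodes per chunk. A non-pivotal branch at $w$ is removed by deleting its off-spine children, deferring the branch to $v$; the loss is bounded because $R'(v)\ge(1-\eps)R'(w)$ and the weight between $w$ and $v$ is at most $\delta$, so almost surely no extra sale occurs in between. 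Your ``replace by the better child'' is close in spirit but underspecified---``better'' is not well-defined since the children correspond to different inventory states---whereas the paper's rule (keep the spine child) together with the $R'$-based pivot selection makes the accounting go through.
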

\begin{proof}
Let us view an optimal ASPM decision tree, with expected revenue $\opt$ as consisting of a {\em spine}, which is the path followed if no buyer buys a copy, along with decision subtrees hanging from many, possibly all, nodes of the spine. Note that all nodes may not be branching nodes, so a spine need not be left by the process at the very moment that a sale is recorded, but may {\em branch out} at a later point. Each such subtree, hanging from a node $w$ (say) on the spine, are optimal ASPMs for selling some $\ell < K$ copies to only buyers that are do not appear in any ancestor node of $w$. We shall only focus on how to modify the ASPM to have 
\begin{itemize}
\item there are at most $(K/\eps)^{O(1)}$ branching nodes on the spine, and 
\item the weight of the spine shall be at most $(K/\eps)^{O(1)}$,
\end{itemize}
while only losing a factor of $(1-c\eps)$ in expected revenue for some constant $c$.

The subtrees, since they are selling less than $K$ copies, can be transformed inductively (when a single copy is left, the subtree is just a path and trivially satisfies the required properties). Such a tree will satisfy the properties listed in Lemma \ref{truncated-aspm} (for the last property, note that any path can be decomposed into at most $K$ contiguous parts, each of which is a spines of some subtree, since leaving a spine implies a sale). 
Overall, the entire transformation shall cause a loss factor of $(1-cK\eps)$. This achieves our goal, since we could have instead started by scaling down $\eps$ to $\eps/cK$.

As a first step, we truncate the spine. For any node $w$, let $R(w)$ be the expected revenue obtained from the rest of the selling process (excluding the contribution of the buyer at $w$ itself), conditioned upon the selling process reaching node $w$. We find the earliest ({\em i.e.} closest to the root) node $w$ on the spine such that $R(w)\leq\eps\opt$, and delete all children of $w$ and the subtrees under them. This only causes a loss of $\eps \opt$ -- moreover, the probability of reaching $w$ could have been at most $\eps$, so the weight of the truncated spine is at most $K\log \frac{K}{\eps}$ (similar argument as Lemma \ref{truncated}). This immediately achieves the second property listed above, and it remains to limit the number of branching nodes. We can now assume that $R(w)> \eps \opt$ for all nodes $w$ on the spine.

For a node $w$, let $R'(w)$ denote, conditioned upon the selling process reaching $w$ and then have less than $K$ items to sell after $w$, the expected revenue from the rest of the selling process. Clearly, $R'(w)<R(w)$, since only higher revenue can be gained from the same set of buyers if there are more copies of the item to sell. A somewhat less obvious fact is that $R'(w)>R(w)/4K$. This is because $R'(w)$ is the result of selling at least one copy of the item to the same set of buyers as $R(w)$, except that $R(w)$ may have as many as $K$ copies of the item. Looking back at Section \ref{sec-LPalgo}, if the number of items is decreased from $K$ to $1$ (keeping set of buyers unchanged), then the optimum of the linear program $\lpkspp$ decreases by a factor of at most $K$ (scaling down the variables by a factor of $K$ gives a feasible solution), and the optimal revenue is always within factor $1/2$ of the LP optimum (since $1-\frac{1}{\sqrt{2\pi K}} \geq 1/2$ for all $K$). This shows that $R'(w) > \eps\opt/4K$ for all nodes on the spine.

Divide the spine into segments that either consist of a single buyer, or multiple buyers whose weights add up to no more than $\delta$. These segments may have branching nodes in them, and hence may not be permutable. Clearly there are at most $(K/\eps)^{O(1)}$ such segments, and now we shall focus on modifying each segment separately. We shall modify subtrees hanging from nodes in the segment, so that the segment can be subdivided into $poly(K/\eps)$ non-branching segments, thus completing the proof. Clearly we need to only consider those segments that comprise multiple small buyers. Let us consider one such segment, and describe the necessary modification to the tree. 

Define a minimal set of {\em pivotal nodes} in the segment, that satisfies the following condition: For any node $w$ in the segment, there is a pivotal node $v$ that is a descendant of $w$, such that $R'(v)\geq (1-\eps)R'(w)$. Since $\frac{\eps}{4K}\opt \leq R'(w)\leq \opt$ for all nodes $w$, we have at most $O(\eps^{-1} \log(K/\eps))$ pivotal nodes. We shall make modifications to the decision tree so that the {\em pivotal nodes are the only branching nodes in the segment}.

Let $v$ be the pivotal node satisfying this condition for $w$, that is nearest to $w$ in the segment. Suppose that $w$ is a branching node. We delete all children of $w$ that are not part of the spine, and simply make it a non-branching node. We do this for all non-pivotal, branching nodes in the segment. Recall that at every branching node, the choice of which children the process follows is based only upon the number of copies of the item left. Now, the segment has few enough branching nodes -- branching nodes are a subset of pivotal nodes. To argue a limited loss in revenue, we need to analyze the values $R'(w)$ in the modified trees, let us denote them by $R'_{mod}(w)$. It suffices to show that $R'_{mod}(w)\geq (1-2\eps) R'(w)$. Since $R'(w_1)$ and $R'_{w_2}$, where $w_1$ and $w_2$ are distinct nodes on the spine, are expectations conditioned upon disjoint events, this implies that the expected revenue of the entire tree falls by a factor of at most $(1-2\eps)$ due to this modification.

To show that $R'_{mod}(w)\geq (1-2\eps) R'(w)$, we can almost say that $R'_{mod}(w)$ is at least to $R'(v)$, since the branching has been deferred until node $v$. The only difference is that some small buyers get executed between $w$ and $v$. So if there are $\ell$ items left after $w$, there may be less than $\ell$ items when $v$ is reached in the modified tree -- however, the probability of this event is less than $\delta$, and is independent of the history of events up to $w$. So, neglecting the contribution of nodes between $w$ and $v$ (but taking into account their discounting effect on descendant nodes), $R'_{mod}(w) \geq (1-\delta) R'(v)$. Since $R'(v)\geq (1-\eps)R'(w)$, we have our result.

Thus each segment has at most $O(\eps^{-1} \log(K/\eps))$ branching nodes now, which implies that the entire spine has $(K/\eps)^{O(1)}$ branching nodes. This completes the proof.
\end{proof}

\bibliographystyle{abbrv}
\bibliography{pricingbiblio} 

\appendix

\section{Discretization}\label{discrete}

We explain why we can assume the following for the value distribution of each buyer: it is discrete, and the probability mass at all points, if non-zero, is an integer multiple of $\frac{1}{n^2}$. The assumption can only cause a loss of $(1-\frac{1}{n})$ in the expected revenue: given an instance, we can create a discrete distribution with the above properties, corresponding to each value distribution, and an algorithm for computing an $\alpha$-approximate SPM or ASPM in the modified instance gives an $\alpha(1-\frac{1}{n})$-approximation for the original instance.

Let $c_{iv}=v\tp_{iv}$ be the expected revenue from buyer $B_i$ if price $v$ is posted to it. First, we can simply keep only those $v$ that are powers of $(1-\frac{1}{n^2})$, and assume that there is probability mass on only these points (leave $\tp_{iv}$ unchanged). Next, for each such $v$, alter $v$ and $\tp_{iv}$ so that their product $c_{iv}$ remains unchanged, but $\tp_{iv}$ changes to the closest integral multiple of $\frac{1}{n^2}$ that is greater than $\tp_{iv}$. This does not change the possible choices of expected revenue that can be obtained from a buyer upon reaching it, and their effect on future buyers, {\em i.e.} success probability, changes by $1/n^2$. The changes in the effect on the future can add up over $n$ buyers to change the probability of reaching a particular buyer by at most $1/n$, so we can neglect this change.

\section{A Property of Poisson Distribution}\label{poisson}

The proof of Lemma \ref{approxfactor} uses the following property of Poisson variables.
\vspace*{-0.0cm}\begin{lemma}
Let $P$ be a Poisson variable with mean $K$, {\em i.e.} for all integers $m\geq 0$, $\Pr{P=m} = \frac{K^m}{m! e^K}$. Then $\E{\max\{0,P-K\}} = \frac{K^{K+1}}{K! e^K}$, and so $\E{\min\{P,K\}} = \E{P - \max\{0,P-K\}} = K - \frac{K^{K+1}}{K! e^K}$.
\end{lemma}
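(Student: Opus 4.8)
The plan is to reduce everything to the single computation of $\E{\max\{0,P-K\}}$. Note that pointwise $\min\{P,K\}=P-\max\{0,P-K\}$ (check the two cases $P\le K$, where both sides equal $P$, and $P>K$, where both sides equal $K$), and $\E{P}=K$ by definition of the Poisson mean. Hence, once we show $\E{\max\{0,P-K\}}=\frac{K^{K+1}}{K!e^K}$, the identity $\E{\min\{P,K\}}=K-\frac{K^{K+1}}{K!e^K}$ follows immediately by linearity of expectation.

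To compute $\E{\max\{0,P-K\}}$, I would write it as the (absolutely convergent, nonnegative-term) series $\sum_{m=K+1}^{\infty}(m-K)\frac{K^m}{m!e^K}$; here we use that $K$ is a positive integer, so the event ``$P>K$'' is exactly ``$m\ge K+1$''. Split this as $\sum_{m\ge K+1} m\frac{K^m}{m!e^K}\ -\ K\sum_{m\ge K+1}\frac{K^m}{m!e^K}$. In the first sum use $\frac{m}{m!}=\frac1{(m-1)!}$ and reindex $j=m-1$, obtaining $K\sum_{j\ge K}\frac{K^{j}}{j!e^K}$; the second sum is $K\sum_{m\ge K+1}\frac{K^m}{m!e^K}$. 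The difference of these two tail sums telescopes: every term with index $\ge K+1$ cancels, leaving only the $j=K$ term $K\cdot\frac{K^K}{K!e^K}=\frac{K^{K+1}}{K!e^K}$, which is the claimed value.

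There is essentially no hard step here; the only points requiring a word of care are (i) justifying the term-by-term rearrangement of the two sums, which is immediate since all terms are nonnegative and the total is dominated by $\E{P}<\infty$, and (ii) keeping the index bookkeeping straight so that the two tail sums line up and cancel exactly. Both are routine, so the main ``obstacle'' is merely presentational — making the telescoping step transparent.
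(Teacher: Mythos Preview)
Your proof is correct. The paper's own argument establishes the same identity $\sum_{m=K+1}^{\infty}\frac{K^m(m-K)}{m!}=\frac{K^{K+1}}{K!}$ but proceeds by induction on the partial sums: it verifies that for every $j\ge 1$,
\[
\frac{K^{K+1}}{K!}-\sum_{m=K+1}^{K+j}\frac{K^m(m-K)}{m!}=\frac{K^{K+j+1}}{(K+j)!},
\]
and then lets $j\to\infty$. Your approach instead splits the factor $(m-K)$ into $m$ and $-K$, uses $m/m!=1/(m-1)!$ to shift the index in the first piece, and observes that the two resulting tails over $j\ge K$ and $m\ge K+1$ cancel except for the single $j=K$ term. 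The two arguments are really the same telescoping phenomenon viewed from different angles: the paper tracks the residual tail explicitly step by step, while you collapse the whole series in one reindexing. Your version is marginally slicker since it avoids setting up and verifying an inductive hypothesis; the paper's version has the minor advantage of giving an explicit closed form for every partial sum along the way.
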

\vspace*{-0.0cm}\begin{proof}
All we need to show is that $\sum_{m=K+1}^{\infty} \frac{K^m (m-K)}{m!} = \frac{K^{K+1}}{K!}$.

It is easy to show by induction that for any $j\geq 1$, 
$$\frac{K^{K+1}}{K!} - \sum_{m=K+1}^{K+j} \frac{K^m (m-K)}{m!} = \frac{K^{K+j+1}}{(K+j)!}\enspace .$$

Since $\lim_{x\ra \infty} \frac{K^{x+1}}{x!} =0$, the proof is complete.
\end{proof}

%
%
%


\end{document}